\documentclass[lettersize,journal]{IEEEtran}
\usepackage{amsfonts}
\usepackage{amssymb}
\usepackage{amsmath}
\usepackage{algorithmic}
\usepackage{array}
\usepackage{soul}
\usepackage{multirow}
\usepackage[table]{xcolor}
\PassOptionsToPackage{hyphens}{url}
\usepackage{hyperref}

\usepackage{booktabs} 
\usepackage{textcomp}
\usepackage[ruled]{algorithm2e}
\usepackage{stfloats}
\usepackage{dsfont}
\usepackage{url}
\usepackage{xcolor}
\usepackage{subcaption}
\usepackage{verbatim}
\usepackage{graphicx}
\allowdisplaybreaks[4]
\usepackage{cite}
\usepackage{upgreek}
\usepackage{pifont}
\usepackage{makecell}
\usepackage{enumitem}
\newtheorem{theorem}{Theorem}
\newtheorem{definition}{Definition}
\newtheorem{remark}{Remark}

\newtheorem{lemma}{Lemma}

\ifodd 1
\newcommand{\rev}[1]{{\color{blue}#1}} 
\else
\newcommand{\rev}[1]{#1}
\fi

\begin{document}

\title{Birdcast: Interest-aware BEV Multicasting for Infrastructure-assisted Collaborative Perception}


\author{Yanan Ma, Zhengru Fang, Yihang Tao, Yu Guo, Yiqin Deng,~\IEEEmembership{Member,~IEEE}, \\
Xianhao Chen,~\IEEEmembership{Member,~IEEE}, and Yuguang Fang,~\IEEEmembership{Fellow,~IEEE}
\thanks{
The work was supported in part by the JC STEM Lab of Smart City funded by The Hong Kong Jockey Club Charities Trust under Contract 2023-0108, in part by the Research Grants Council of the Hong Kong SAR, China (Project No. CityU 11216324), and in part by the Hong Kong SAR Government under the Global STEM Professorship. 
The work of Y. Deng was supported in part by the National Natural Science Foundation of China under Grant No. 62301300 and in part by the Shandong Province Science Foundation under Grant No. ZR2023QF053.
The work of X. Chen was supported in part by the Research Grants Council of Hong Kong under Grant 27213824 and CRS HKU702/24.}
\thanks{Y. Ma, Z. Fang, Y. Tao, Y. Guo, and Y. Fang are with the Hong Kong JC Lab of Smart City and the Department of Computer Science, City University of Hong Kong, Hong Kong, China. E-mail: yananma8-c@my.cityu.edu.hk, zhefang4-c@my.cityu.edu.hk, yihang.tommy@my.cityu.edu.hk, yu.guo@my.cityu.edu.hk, my.fang@cityu.edu.hk.}
\thanks{Y. Deng is with the School of Data Science, Lingnan University, Tuen Men, Hong Kong, China. E-mail: yiqindeng@ln.edu.hk.}
\thanks{X. Chen is with the Department of Electrical and Electronic Engineering, University of Hong Kong, Hong Kong, China. E-mail: xchen@eee.hku.hk.}
}



\maketitle
\begin{abstract}
Vehicle-to-infrastructure collaborative perception (V2I-CP) leverages a high-vantage node to transmit supplementary information, i.e., bird's-eye-view (BEV) feature maps, to vehicles, effectively overcoming line-of-sight limitations. However, the downlink V2I transmission introduces a significant communication bottleneck. Moreover, vehicles in V2I-CP require \textit{heterogeneous yet overlapping} information tailored to their unique occlusions and locations, rendering standard unicast/broadcast protocols inefficient. To address this limitation, we propose \textit{Birdcast}, a novel multicasting framework for V2I-CP. By accounting for individual maps of interest, we formulate a joint feature selection and multicast grouping problem to maximize network-wide utility under communication constraints. Since this formulation is a mixed-integer nonlinear program and is NP-hard, we develop an accelerated greedy algorithm with a theoretical $(1 - 1/\sqrt{e})$ approximation guarantee. While motivated by CP, Birdcast provides a general framework applicable to a wide range of multicasting systems where users possess heterogeneous interests and varying channel conditions. 
Extensive simulations on the V2X-Sim dataset demonstrate that Birdcast significantly outperforms state-of-the-art baselines in both system utility and perception quality, achieving up to 27\% improvement in total utility and a 3.2\% increase in mean average precision (mAP).

\end{abstract}

\begin{IEEEkeywords}
Collaborative perception, multicasting, submodular optimization, connected and autonomous vehicle (CAV), autonomous driving.
\end{IEEEkeywords}

\section{Introduction}
\IEEEPARstart{A}{ccurate} perception serves as a fundamental pillar for safe navigation and decision-making in autonomous driving~\cite{chen2024vehicle, Sun_2020_CVPR, ma2025sense4fl, tao2025gcpguarded}. 
However, while the onboard perception capabilities of connected and autonomous vehicles (CAVs) have advanced significantly, single-vehicle perception remains inherently hindered by physical occlusions, such as heavy trucks and high-rise buildings, especially in busy roads and high-density urban environments \cite{yin2025occlusion}. Many real-world failures, such as a self-driving vehicle crashing into a railroad crossing arm~\cite{Tesla}, highlight the critical risks associated with these line-of-sight (LoS) blind spots. To overcome these LoS barriers, collaborative perception (CP) via vehicle-to-everything (V2X) communication has emerged as a transformative solution.
By sharing complementary sensory information among distributed nodes, CP extends the effective perception range of individual CAVs, enabling them to ``see'' through obstacles~\cite{hu2025collaborative, fang2024pacp,fang2025r}. 
To this end, vehicle-to-infrastructure  CP (V2I-CP) has emerged as a powerful paradigm~\cite{9228884}. 
In V2I-CP, a high-vantage node (HVN), such as a roadside unit (RSU) equipped with advanced cameras/LiDAR sensors, disseminates the information, i.e., the processed bird's-eye view (BEV) feature grids\footnote{Without loss of generality, we assume the spatial representations are partitioned into multiple BEV feature grids for transmissions, hereafter referred to simply as BEV grids.}, to CAVs, thereby overcoming their inherent LoS limitations~\cite{deng2025uav, shi2024soar, fang2025task}.

\begin{figure}
    \centering
    \includegraphics[width=1.0\linewidth]{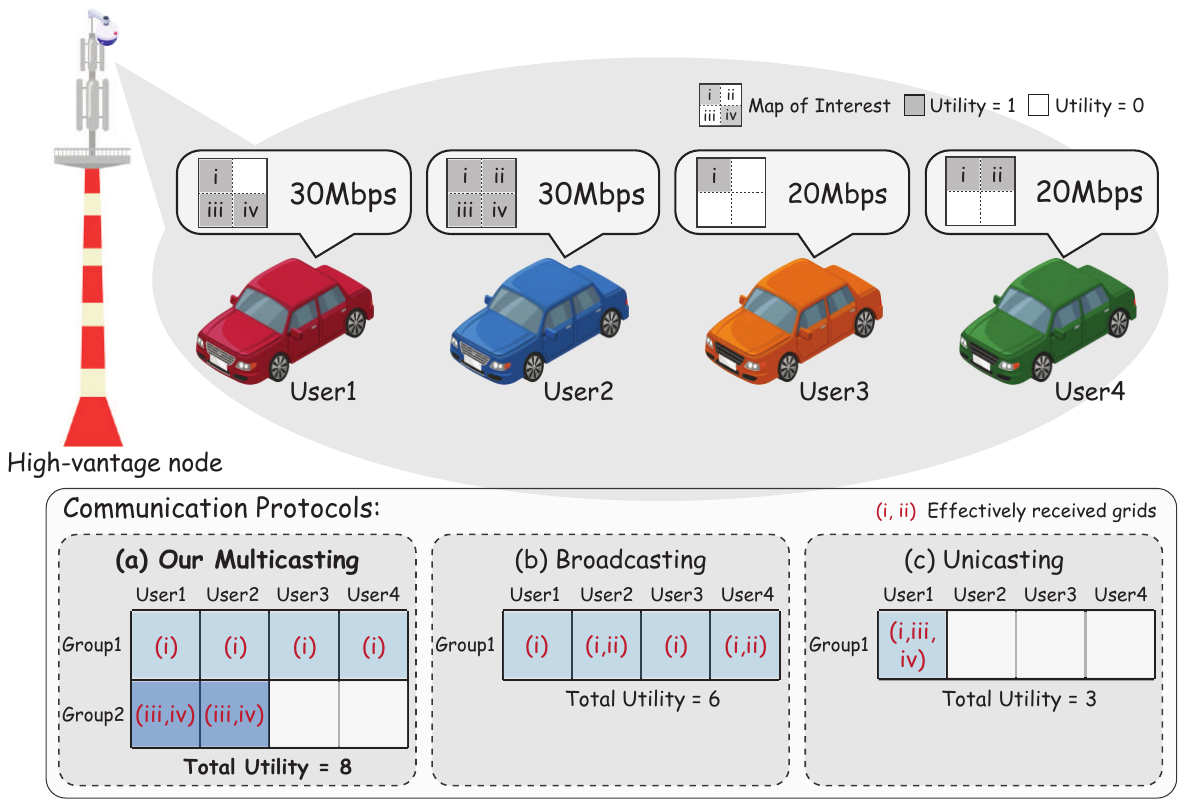}
    \caption{Limitations of broadcast and unicast in V2I-CP. A high-vantage node (HVN) transmits BEV grids to four users with heterogeneous requests (gray quadrants) and varying data rates under a latency budget of 14 ms (the data volume for one grid is 15 KB). Broadcast and unicast cannot efficiently exploit overlapping user interests, yielding suboptimal utilities of 6 and 3, respectively. In contrast, multicasting optimally groups users based on shared requests, resulting in a utility of 8.
    }
    \label{fig:toyexample}
    \vspace{-0.0cm}
\end{figure}

Nevertheless, the transmission of high-dimensional BEV grids struggles to satisfy the stringent latency requirements of autonomous driving under communication constraints~\cite{hu2022where2comm, fang2024pacp}. A core issue lies in the fact that the information requests of the users, i.e., vehicles in this system, are inherently \textit{heterogeneous yet overlapping}. Specifically, while adjacent vehicles share some common regions of interest (RoIs), they also demand different sensor data due to varied occlusions, sensor footprints, and trajectories.
However, the communication schemes underpinning V2I-CP in the existing literature are typically traditional broadcast or unicast protocols, both of which are inefficient in such scenarios. Specifically, broadcast protocols lead to severe resource underutilization by transmitting irrelevant data to users with distinct requests, whereas unicast fails to exploit the overlapping information requests among neighboring users. Furthermore, traditional multicasting schemes cannot adequately resolve this issue, as they largely assume all users require identical content (e.g., a popular video stream \cite{chen2015fair, elbadry2024wireless, zhang2021joint}), resulting in suboptimal network performance for V2I-CP.

To illustrate these limitations, Fig. \ref{fig:toyexample} presents a toy example of an HVN serving four users with varying data rates and distinct requirements, i.e., maps of interest (MoIs). Under a 14 ms latency budget, traditional broadcast is constrained by the weakest vehicle (20 Mbps), yielding a suboptimal system utility of 6. Unicast sequentially serves individual users without exploiting shared requests, resulting in the lowest utility of 3. In contrast, an optimal strategy groups users by leveraging their overlapping interests. By transmitting grid $i$ to all users and the shared grids $(iii, iv)$ to stronger users, this approach achieves the highest utility of 8.

To address these communication inefficiencies, this paper proposes \textit{Birdcast}, a novel optimization framework for V2I-CP, which introduces interest-aware BEV multicasting through jointly optimized content selection, i.e., feature selection, and multicast grouping. Concretely, we first design a protocol in which the HVN and individual users generate a customized MoI for each user, driven by its specific RoI and prediction confidence. Guided by these generated MoIs, we formulate an optimization problem to maximize the total utility under the latency budget constraint. Since the resultant problem is NP-hard, we transform it into an equivalent \textit{grid-rate} selection problem and demonstrate that the transformed problem is a monotone submodular maximization problem subject to a knapsack constraint. Finally, we develop an accelerated greedy algorithm that achieves a $(1-1/\sqrt{e})$-approximation guarantee with provably low computational complexity, making it feasible and suitable for the real-time multicasting optimization required for time-sensitive CP applications.

While motivated primarily by the challenges of V2I-CP, this optimization framework is broadly applicable to general multicasting problems where users have heterogeneous yet overlapping requests. To the best of our knowledge, Birdcast is the \textit{first optimization framework for joint content selection and multicast grouping over such overlapping requests while providing a theoretical approximation guarantee}. 
The main contributions of this work are summarized as follows:

\begin{itemize}
    \item We propose Birdcast, a novel multicast framework that utilizes an HVN to complement the perceptual views of individual users. Under this framework, we design a protocol that allows the HVN and users to collectively construct user-specific maps of interest based on their specific regions of interest and prediction confidence.

    \item Guided by these maps of interest, we formulate the joint feature selection and multicast grouping problem and rigorously prove its NP-hardness. To achieve tractability, we transform the problem from a novel \textit{grid-rate} perspective and mathematically prove that the optimal solution to this transformed problem exactly matches that of the original.

    \item By exploiting the submodularity of the transformed objective function, we develop an accelerated greedy algorithm.  We rigorously demonstrate that this algorithm achieves a provable approximation ratio of $1 - \sqrt{1/e}$ while maintaining the low computational complexity required for real-time execution. 
    
    \item We conduct extensive simulations using the V2X-Sim dataset to validate the superiority of Birdcast against baseline approaches. Our results demonstrate significant improvements, achieving up to 27\% in total system utility and a 3.2\% gain in mean average precision (mAP).

\end{itemize}

The remainder of this paper is organized as follows. Section \ref{sec:related_work} introduces related work. Section \ref{sec:framework} elaborates on the system model and the Birdcast framework. Section \ref{sec:problem} formulates the interest-aware multicasting optimization problem. The problem transformation is presented in Section \ref{sec:reformulation}. Section \ref{sec:algorithm} offers the corresponding solution approach and theoretical analysis. Section \ref{sec:simulation} provides the simulation results. Finally, concluding remarks are presented in Section \ref{sec:conclusion}.

\section{Related Work}
\label{sec:related_work}

\subsection{Collaborative Perception}
Collaborative Perception (CP) has emerged as a pivotal paradigm to extend the sensing range of vehicles and overcome line-of-sight occlusions~\cite{fang2025r, hu2025collaborative}. Based on the level of data shared, CP strategies are generally categorized into three paradigms: early fusion (sharing raw sensory data), late fusion (sharing post-detection outputs), and intermediate fusion (sharing extracted deep features). 

In recent years, feature sharing has attracted significant attention because it strikes a better tradeoff between perception accuracy and communication costs. For instance, Who2com~\cite{liu2020who2com} and When2com~\cite{liu2020when2com} introduced handshake mechanisms to learn communication graphs, selecting partners based on potential information gain. Where2comm~\cite{hu2022where2comm} introduced spatial confidence maps to selectively transmit only perceptually critical information. CoSDH~\cite{xu2025cosdh} developed a hybrid fusion scheme that dynamically switches between intermediate and late fusion based on a supply-demand awareness module. Furthermore, PACP \cite{fang2024pacp} formulated a priority-aware framework, leveraging submodular optimization to schedule communication links based on the correlations between the ego vehicle and its neighbors.

Given Vehicle-to-Vehicle (V2V) CP remains still vulnerable to severe occlusions, another line of research incorporates HVNs, e.g., RSUs, to provide an extended field of view~\cite{bae2024rethinking, 10909254}. For instance, V2IViewer~\cite{yi2024v2iviewer} proposed a heterogeneous multi-agent middle layer to efficiently align infrastructure and vehicle features for robust long-range detection.  Directed-CP~\cite{tao2025directed} introduced an RSU-aided proactive attention mechanism that guides the ego user to prioritize feature aggregation from safety-critical sectors. AirV2X~\cite{gao2025airv2x} established a unified benchmark for air-ground collaboration, providing a high-fidelity dataset that integrates LiDAR-camera data across CAVs, RSUs, and drones. Despite these advancements, these frameworks have not addressed underlying communication bottlenecks from a resource optimization perspective. Consequently, they continue to rely on naive broadcast or unicast protocols, leading to inefficient network utilization.

\begin{table}[!t]
\centering
\caption{Summary of related works in resource allocation for wireless multicasting.}
\label{table:related_work}
\resizebox{\linewidth}{!}
{
\renewcommand{\arraystretch}{1.4}
\setlength{\tabcolsep}{2mm}
\begin{tabular}{|c|c|c|c|}
\hline
{\textbf{Ref.}}& 
\makecell[c]{\textbf{Heterogeneous} \\ \textbf{Requests}}& 
\makecell[c]{\textbf{Heterogeneous} \\ \textbf{Channel Conditions}}& 
\makecell[c]{\textbf{Performance} \\ \textbf{Guarantee}}
  \\ \hline
\cite{chen2015fair} 
    & {\ding{55}}  & {\ding{52}}  & {\ding{52}}   \\ \hline
\cite{elbadry2024wireless} 
    & {\ding{55}}  & {\ding{52}}  & {\ding{55}}  \\ \hline 
\cite{zhang2021joint} 
    & {\ding{55}}  & {\ding{52}}  & {\ding{55}}  \\ \hline     
\cite{lyu2022distributed} 
    & {\ding{55}}  & {\ding{52}}  & {\ding{52}} \\ \hline
\cite{lin2012video} 
    & {\ding{52}}  & {\ding{52}}  & {\ding{55}} \\ \hline
Ours 
    & {\ding{52}}  & {\ding{52}}  & {\ding{52}}  \\ \hline
\end{tabular}
}
\end{table}

\subsection{Resource Allocation in Wireless Multicasting} 
Wireless multicasting is a fundamental technique for enhancing spectral efficiency by simultaneously serving multiple users requesting the same data~\cite{wang2019efficient, elbadry2024wireless, chen2015fair, zhang2021joint}. A primary challenge in multicast transmission is the ``straggler'' problem, where the achievable data rate of a group is constrained by the user with the worst channel conditions. To mitigate this, resource allocation is the key. Elbadry \textit{et al.}~\cite{elbadry2024wireless} proposed an adaptive multicast rate control scheme designed to satisfy the goodput and loss requirements of the majority, thereby preventing system performance from being severely constrained by stragglers. 
Zhang \textit{et al.}~\cite{zhang2021joint} developed a two-step multicast framework for live video delivery. They first devised a K-means++ clustering algorithm to automatically determine the number of multicast groups and user assignment, and then employed Lyapunov optimization to manage group-level resource allocation and version selection based on the fixed grouping.
Lyu \textit{et al.}~\cite{lyu2022distributed} developed a distributed graph-based framework to maximize throughput by jointly optimizing multicast link establishment with data dissemination decisions. Chen \textit{et al.}~\cite{chen2015fair} proposed a dynamic programming approach for joint group partitioning and resource allocation in eMBMS. 

However, the existing literature often assumes that all users in the multicasting system require identical information. In practice, in V2I-CP scenarios, a salient characteristic is that users have heterogeneous yet overlapping information interests. Very few studies have addressed the multicasting optimization under such heterogeneous user requests. Lin \textit{et al.} \cite{lin2012video} proposed a marginal-utility-based video multicast scheme that allocates bandwidth by considering both varying channel conditions and diverse user interests. Nevertheless, this approach is tailored to conventional video streams and relies on heuristics that lack theoretical performance guarantees.

To the best of our knowledge, the joint optimization of content selection and multicast grouping for heterogeneous requests remains an open challenge. To bridge this gap, we introduce an \textit{interest-aware multicast framework} with a performance guarantee. A comprehensive comparison between our proposed approach and the existing literature is summarized in Table \ref{table:related_work}.

\section{The Proposed Birdcast Framework}
\label{sec:framework}

In this section, we present the Birdcast framework for the V2I-CP system. We detail the system model, the communication model, the construction of the map of interest, and the feature aggregation process.

\begin{figure}
    \centering
    \includegraphics[width=1.0\linewidth]{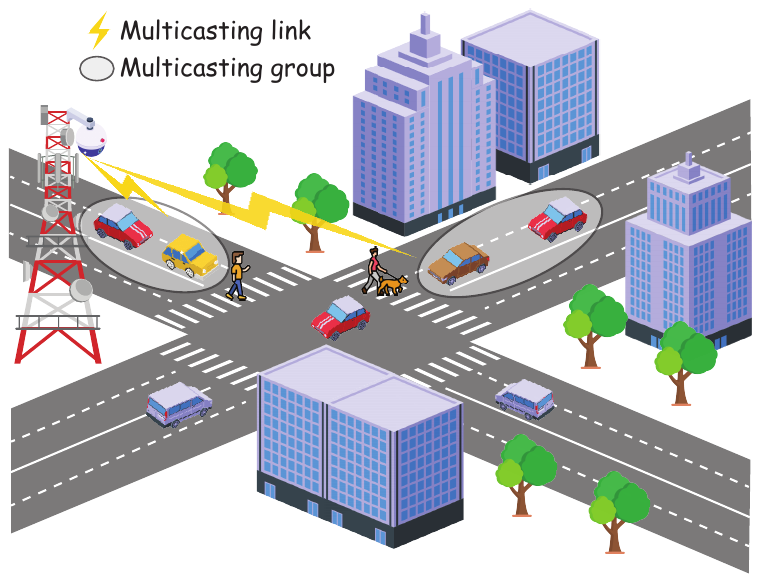}
    \caption{Illustration of the Birdcast framework. The HVN leverages its elevated altitude to capture a comprehensive BEV. To enhance perception performance and communication efficiency, the users and the HVN collaboratively identify a map of interest to assign spatial weights to each BEV grid. Consequently, the HVN can selectively transmit data to users by jointly considering the users' channel conditions and heterogeneous interests, i.e., prioritizing high-utility grids that yield substantial performance gains.}
    \label{fig:system}
    \vspace{-0.0cm}
\end{figure}

\subsection{System Model}
As illustrated in Fig. \ref{fig:system}, we consider a vehicular network comprising one HVN (e.g., an RSU) and a set of users (e.g., ground CAVs), denoted as $\mathcal{N}=\{1, \dots, N\}$. Ground users, positioned at $\mathbf{p}_n=(x_n, y_n, 0)$, frequently suffer from LoS occlusions caused by static infrastructure and dynamic traffic (e.g., trucks and buses). To mitigate this, the HVN, located at an elevated position $\mathbf{p}_{\mathrm{H}}=(x_{\mathrm{H}}, y_{\mathrm{H}}, z_{\mathrm{H}})$, can capture a comprehensive BEV of the environment via camera and/or LiDAR and then disseminate the information to users.

Broadcasting all BEV grids is highly inefficient and users may receive redundant information they do not really need, such as BEV grids with short distances and clear LoS, or grids with longer distances and in opposite directions, as shown in Fig. \ref{fig:system}. To improve communication efficiency, the HVN and users have to jointly identify a map of interest (MoI) by assigning weights to each grid. Based on this, the HVN can selectively multicast data from high-utility grids that provide significant performance gains to the corresponding users, thereby facilitating robust trajectory planning and ensuring safety in complex traffic environments~\cite{hu2022where2comm, xu2025cosdh}.

\subsection{Communication Model}
We consider downlink transmission from the HVN to users via a cellular V2I channel. To accommodate varying channel conditions, HVN dynamically selects an appropriate modulation and coding scheme (MCS) from a discrete set to reliably multicast information to users \cite{chen2015fair, tan2008link}.

We determine MCS configurations based on their corresponding link-level spectral efficiency.
Let $\mathcal{R}=\{r_1, r_2, \dots, r_M\}$ denote the finite set of MCS rate options (in bits/s/Hz), where $r_i < r_j$ for $i<j$, and $M$ is the cardinality of the set. We define a corresponding signal-to-noise ratio (SNR) threshold set $\Gamma=\{\gamma_{\mathrm{th}, 1}, \gamma_{\mathrm{th}, 2}, \dots, \gamma_{\mathrm{th}, M+1}\}$, where $\gamma_{\mathrm{th}, M+1}=\infty$. For reliable decoding, each MCS rate $r_m$ requires the received SNR to meet or exceed its lower bound $\gamma_{\mathrm{th}, m}$. 
Specifically, the set of MCS options that can be adopted for user $n$ is given by 
\begin{equation}
    \mathcal{C}_n  = \{ r_m \in \mathcal{R} \mid \gamma_n \geq \gamma_{\mathrm{th}, m}  \},
\end{equation}
where $\gamma_n$ is the received SNR for user $n$ estimated based on channel state information. 
Let $r_n^{\max} = \max\{r_j \mid r_j \in \mathcal{C}_n\}$ denote the maximum MCS rate decodable by user $n$. The corresponding maximum achievable data rate for user $n$ is defined as:
\begin{equation}
   R_n = B r_n^{\max},
\end{equation}
where $B$ denotes the spectrum bandwidth for multicasting.


\subsection{Construction of Map of Interest}
To facilitate communication-efficient CP, we construct a user-specific MoI, denoted by $\mathbf{M}_n \in [0,1]^{H \times W}$, which assigns an importance weight to each BEV grid for user $n$. To be specific, we assume the HVN and each user process raw sensor observations $\mathbf{B}_{\mathrm{HVN}}$ and $\mathbf{B}_n$ (e.g., 3D point clouds) using a shared feature encoder $\Phi_{\mathrm{Enc}}(\cdot)$ to extract high-dimensional BEV representations. The resulting feature maps are denoted as $\mathbf{F}_{\mathrm{HVN}}=\Phi_{\mathrm{Enc}}(\mathbf{B}_{\mathrm{HVN}}) \in \mathbb{R}^{H \times W \times V}$ and $\mathbf{F}_n=\Phi_{\mathrm{Enc}}(\mathbf{B}_n) \in \mathbb{R}^{H \times W \times V}$, where $H, W$, and $V$ represent the height, width, and channel dimensions, respectively. Subsequently, the HVN and the users collaboratively construct the map of interest $\mathbf{M}_n$. The interest level of each grid is jointly determined by two core factors: i) the inherent information density of the HVN's features at that location, and ii) the confidence gap between HVN and users. 

\textit{\textbf{i) Construction of Information Map $\mathbf{M}_{\mathrm{info}}$:}}
The utility of a transmitted BEV grid is intrinsically related to its information density. For instance, homogeneous areas (e.g., empty road surfaces) are perceptually sparse, and hence transmitting BEV grids from such uninformative regions consumes valuable bandwidth without yielding meaningful perceptual gains.

Motivated by the entropy-based filtering mechanism in \cite{wang2023umc}, we evaluate the information density of the HVN's feature map using a sliding window to filter out these uninformative areas. First, the original feature map $\mathbf{F}_{\text{HVN}} \in \mathbb{R}^{H \times W \times V}$ are compressed into $\Bar{\mathbf{F}}_{\text{HVN}} \in \mathbb{R}^{H \times W \times 1}$ using a trainable $1 \times 1$ convolution network \cite{wang2024drones}. Then, a $W_\mathrm{s} \times W_\mathrm{s}$ sliding window traverses this compressed feature map to compute the discrepancy between the central feature and its neighbors. By normalizing these values to $(0, 1)$ via a sigmoid function $\sigma$, the average across the window serves as the local correlation score $p_{i,j}$:

\begin{small}
\begin{equation}
    p_{i,j} = \frac{1}{W_\mathrm{s}^2} \sum_{w_1=0}^{W_\mathrm{s}-1} \sum_{w_2=0}^{W_\mathrm{s}-1}  \sigma \left(\Bar{\mathbf{F}}_{\text{HVN}}(i+w_1, j+w_2)- \Bar{\mathbf{F}}_{\text{HVN}}(i,j)\right).
\end{equation}
\end{small}

Based on this, the information density of each feature grid is calculated as:
\begin{equation}
[\mathbf{E}_{\text{HVN}}]_{i,j} = p_{i,j} \log p_{i,j}.
\end{equation}
Finally, a binary informativeness mask $\mathbf{M}_{\mathrm{info}} \in \{0,1\}^{H \times W}$ is generated via an element-wise thresholding function:
 \begin{equation}
\mathbf{M}_{\mathrm{info}} = f\left(\mathbf{E}_\text{HVN}, \eta \right),
 \end{equation}
where $f(\cdot, \eta)$ sets the top-$\eta$ percentile of entries in $\mathbf{E}_\text{HVN}$ to 1 while setting the remaining entries to 0. This operation effectively penalizes uniform, low-variation regions, such as empty backgrounds, while preserving critically informative areas.

\textit{\textbf{ii) Construction of Confidence Map $\mathbf{O}_n$:}}
Beyond information density, HVN should supplement information to users by considering their locally available sensor data. Following \cite{hu2022where2comm}, we quantify the perceptual disparity between the HVN and user $n$ using a spatial attention mask, denoted as the confidence map $\mathbf{O}_n \in \mathbb{R}^{H \times W}$. Intuitively, if a user can already reliably detect objects in a specific grid that yield a high local confidence score, it does not require supplementary data. Conversely, a low local score indicates uncertainty, often due to severe occlusions, necessitating assistance from the HVN.
Accordingly, the confidence map is formulated as
\begin{equation}
[\mathbf{O}_n]_{i,j} = \max\{[\mathbf{Q}_{\mathrm{HVN}}]_{i,j} - [\mathbf{Q}_n]_{i,j}, 0\},
\end{equation}
where $\mathbf{Q}_{\mathrm{HVN}}$ and $\mathbf{Q}_n$ represent the confidence scores obtained from the respective detection heads of the HVN and user $n$. This formulation explicitly prioritizes grids where the HVN possesses high perceptual confidence while the user remains uncertain.

\textbf{MoI Construction.}
Finally, the MoI for user $n$ is obtained by combining the information map, the confidence map, and the user's region of interest:
\begin{equation}
\mathbf{M}_n = \mathbf{O}_n \odot \mathbf{M}_{\mathrm{info}} \odot \mathbf{RoI}_n,
\end{equation}
where $\odot$ denotes the Hadamard (element-wise) product, and $\mathbf{RoI}_n \in \{0, 1\}^{H \times W}$ is a binary mask representing the predefined region of interest (RoI) for user $n$.  
Under this unified formulation, the resulting weight $[\mathbf{M}_n]_{i,j}$ equals $0$ if a grid lies outside the user's RoI, is uninformative ($[\mathbf{M}_{\mathrm{info}}]_{i,j} = 0$), or requires no perceptual assistance ($[\mathbf{O}_n]_{i,j} = 0$). Conversely, high values in $\mathbf{M}_n$ indicate the critical regions that the HVN should prioritize for feature multicasting.

\begin{remark}
    While we construct $\mathbf{M}_n$ using confidence scores following \cite{hu2022where2comm}, the proposed Birdcast framework in Section \ref{sec:problem} treats $\mathbf{M}_n$ as a general matrix, allowing the integration of any alternative user-interest indicators.
\end{remark}

\begin{table}[!t]
\centering
\caption{Summary of important notations.}
\label{table:notations}
\renewcommand{\arraystretch}{1.4}
\setlength{\tabcolsep}{2mm}
\begin{tabular}{@{}p{0.85cm}p{7.0cm}@{}}
\hline
\textbf{Notation} & \textbf{Description} \\ 
\hline
~$\mathcal{N}$ & The set of users\\
~$\mathcal{R}$ & The set of MCS rates\\
~$\Gamma$ & The set of SNR thresholds \\
~$r_m$ & The MCS rate (bits/s/Hz) \\
~$B$ & The transmission bandwidth \\
~$\gamma_{\mathrm{th},m}$ & The SNR lower bound to decode at MCS rate $r_m$\\
~$\gamma_n$ & The received SNR for user $n$ \\
~$\mathbf{F}_{\mathrm{HVN}}$ & The feature map of HVN\\
~$\mathbf{M}_n$ & The map of interest (MoI) for user $n$ \\
~$K$ & The number of multicast groups \\
~$\mathcal{G}_k$ & The set of users in multicast group $k$\\
~$\mathbf{W}_k$ & The binary feature selection matrix for group $k$\\
~$\hat{R}_k $ & The transmission rate for the multicast group $\mathcal{G}_k$ \\
~$\delta $ &  The data volume of a single BEV grid \\ 
~$T $ & The latency budget \\
~$\alpha_{n,m}$ & The decodability indicator for user $n$ at MCS rate $r_m$ \\
~$\mathbf{X}$ & The binary grid-rate decision matrix \\
~$x_{l,m}$ & The variable indicating the $l$-th grid is encoded at rate $r_m$
\\
 \hline
\end{tabular}
\end{table}

\subsection{Feature Aggregation}
Based on the MoI and the specific channel conditions of user $n$, the HVN transmits a selected set of BEV grids according to the interest-aware multicast scheme described in Sections \ref{sec:problem}-\ref{sec:algorithm}.

For user $n$ assigned to multicast group $\mathcal{G}_k$ (i.e., $n \in \mathcal{G}_k$), the received feature map is masked by the optimized feature-selection matrix $\mathbf{W}_k$. The received data is thus given by:
\begin{equation}
    \mathbf{D}_{\mathrm{HVN}\rightarrow n} = \mathbf{W}_k \odot \mathbf{F}_{\mathrm{HVN}}.
\end{equation}
Upon reception, user $n$ aligns the HVN feature $\mathbf{D}_{\mathrm{HVN}\rightarrow n}$ to its own BEV frame via a geometric transformation. At each spatial location, the ego feature of user $n$ and the aligned HVN feature are stacked into a matrix $\mathbf{Z} \in \mathbb{R}^{2 \times V}$, which is then aggregated using single-head scaled dot-product self-attention mechanism\cite{hu2022where2comm, tao2025directed}:
\begin{equation}
    \mathbf{A}
    = \mathrm{softmax}\!\left(
        \frac{\mathbf{Z} \mathbf{Z}^{\top}}{\sqrt{V}}
      \right),
    \qquad
    \mathbf{Y} = \mathbf{A} \mathbf{Z},
\end{equation}
where $\mathbf{Z}$ is used as the query, key, and value, and the softmax operation is applied row-wise. 
The row in $\mathbf{Y}$ corresponding to the ego feature forms the fused representation for that specific location. By applying this pointwise operation across all locations, the user constructs a complete fused BEV feature map.
Finally, since this feature extraction and fusion process typically operates across a multi-scale backbone, these multi-scale fused features are upsampled to a uniform resolution and concatenated along the channel dimension. This yields the comprehensive, high-fidelity BEV representation for the detection heads.

For readers’ convenience, the important notations in this paper are summarized in Table \ref{table:notations}.

\section{Interest-aware Multicasting Formulation} 
\label{sec:problem}
Building upon the maps of interest developed in Section \ref{sec:framework}, this section formulates the joint optimization problem for multicast grouping and feature selection. We further show the NP-hardness of the problem, motivating the design of our efficient approximation algorithm.

\subsection{Problem Formulation} 
To accommodate the heterogeneous yet overlapping data requirements of users, we propose an interest-aware multicasting scheme. Specifically, the HVN assigns $N$ users into $K$ multicast groups, denoted by the index set $\mathcal{K} = \{1, \dots, K\}$. For each group $\mathcal{G}_k$, the HVN selectively transmits a subset of BEV grids, represented by a binary feature selection matrix $\mathbf{W}_k \in \{0,1\}^{H \times W}$. To maximize the total utility of the system while strictly adhering to the latency budget, we jointly optimize the user grouping $\mathcal{G}_k$, the MCS rate $\hat{R}_k$, and the feature selection matrix $\mathbf{W}_k \in \{0,1\}^{H \times W}$. The corresponding optimization problem is formulated as follows:
\begin{subequations}\label{p:orig_multicast}
\begin{align}
\mathcal{P}_1: \max_{K, \{\mathbf{W}_k, {\mathcal{G}_{k}}, \hat{R}_k \}_{k=1}^K} &\sum_{n=1}^N \left\langle \mathbf{M}_n, \max_{1\le k\le K} \left\{\mathbb{I}(n \in \mathcal{G}_k)\cdot\mathbf{W}_k\right\} \right\rangle  \label{obj:fair_utility} \\
\text { s.t. } ~~~~ & \sum_{k=1}^K \frac{\delta \|\mathbf{W}_k\|_1}{ \hat{R}_k } \leq T, \label{cons:orig_latency}\\
& \hat{R}_k = \min_{n\in \mathcal{G}_k} \{R_n\}, \quad \forall k \in \mathcal{K}, \label{cons:orig_rate_def}\\
& \mathbf{W}_k \in \{0, 1\}^{H \times W}, \quad \forall k \in \mathcal{K}, \label{cons:orig_binary}
\end{align}
\end{subequations}
where $\mathbb{I}(\cdot)$ is an indicator function that equals $1$ if user $n$ is assigned to group $k$, $\langle \cdot, \cdot\rangle$ denotes the Frobenius inner product (element-wise sum of products), $\delta$ represents the data volume of a single BEV grid, and $T$ denotes the latency budget.

The objective function \eqref{obj:fair_utility} aims to maximize the network-wide utility~\cite{lan2010axiomatic}, where the second term $\max_{1\le k\le K} \{\mathbf{W}_k\} $ ensures that if a BEV grid is received multiple times (across different groups), it contributes to the utility gain only once.
Constraint \eqref{cons:orig_latency} guarantees that the total latency determined by allocated feature volume $\|\mathbf{W}_k\|_1$ and effective group rate $\hat{R}_k$ cannot exceed the allowable latency budget $T$.  Constraint \eqref{cons:orig_rate_def} indicates that the transmission rate $\hat{R}_k$ for multicast group $\mathcal{G}_k$ is bottlenecked by the group member with the poorest channel quality. Without loss of generality, we assume that HVN assigns orthogonal time slots to the multicast groups to eliminate interference and simplify the analysis~\cite{xu2022cellular}.

\begin{remark}
    The formulation in $\mathcal{P}_1$ encompasses unicast and broadcast as special cases. 
    Specifically, our formulation reduces to \textit{broadcast} when $K = 1$ and $\mathcal{G}_1 = \mathcal{N}$, and to \textit{unicast} when $|\mathcal{G}_k| = 1$, $\forall k \in \mathcal{K}$. Moreover, it reduces to a traditional multicasting problem if $\mathbf{M}_n$ is identical for all users, in which case optimizing $\{\mathcal{G}_k\}$ and $\mathbf{W}_k$ becomes rather trivial. Consequently, while motivated by CP, $\mathcal{P}_1$ establishes a generalized optimization framework for wireless multicasting.
\end{remark}

\begin{remark}
    In general, $T$ is typically chosen to be a small value (e.g., $20$ ms) for CP in vehicular networks; otherwise, significant changes in vehicular locations would render the feature maps outdated. 
    Therefore, vehicular mobility is negligible and omitted in our formulation in this paper.
\end{remark}

\subsection{NP-Hardness of the Problem}

We establish the following theorem regarding the complexity of the optimization problem.
\begin{theorem} \label{thm:np_hard}
    Problem $\mathcal{P}_1$ is NP-hard.
\end{theorem}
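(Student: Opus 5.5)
We prove Theorem~\ref{thm:np_hard} by a polynomial-time reduction from the 0-1 knapsack problem, which is NP-hard. We show that a special case of $\mathcal{P}_1$ already encodes knapsack. Then any algorithm solving $\mathcal{P}_1$ (in its decision version, ``is there a feasible solution with utility at least $U$?'') would solve knapsack.

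\textbf{Construction.} Take a knapsack instance with items $l=1,\dots,L$, positive integer values $v_l$, weights $w_l$, capacity $C$, and target value $U$. Build an instance of $\mathcal{P}_1$ with $N=L$ users and a BEV map of $H\times W = 1\times L$ grids. Set $\delta=1$ and $T=C$. Give user $l$ the data rate $R_l = 1/w_l$; if the MCS rates must come from a fixed finite set, let the discrete set $\mathcal{R}$ be chosen to contain these values, which costs only polynomial description size. Give user $l$ the MoI $\mathbf{M}_l = v_l\,\mathbf{e}_l^\top$, normalized by $\max_l v_l$ if entries must lie in $[0,1]$, so that user $l$ is interested only in grid $l$.

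\textbf{Equivalence.} First, a knapsack set $S$ with $\sum_{l\in S} w_l\le C$ yields a solution of $\mathcal{P}_1$. Form singleton groups $\mathcal{G}_l=\{l\}$ for $l\in S$, with $\mathbf{W}_l=\mathbf{e}_l^\top$. This uses latency $\sum_{l\in S} w_l\le T$ and achieves utility $\sum_{l\in S} v_l$.

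Conversely, take any feasible solution of $\mathcal{P}_1$ and let $S$ be the set of users $l$ such that grid $l$ is delivered to a group containing $l$. The utility is exactly $\sum_{l\in S} v_l$, since only grid $l$ has nonzero weight for user $l$. We must show $\sum_{l\in S} w_l\le T$, and this is the main step. For each $l\in S$, grid $l$ is transmitted in some group $k$ with $\hat R_k\le R_l$, so that transmission costs at least $\delta/R_l = w_l$. Different grids occupy different entries of the matrices $\mathbf{W}_k$, so their costs add within the latency sum. Hence the total latency is at least $\sum_{l\in S} w_l$. Merging users into larger groups therefore never helps: the group rate can only decrease, and no grid is useful to two users.

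Thus $\mathcal{P}_1$ has a solution with utility at least $U$ if and only if the knapsack instance does. The construction is polynomial, so $\mathcal{P}_1$ is NP-hard.

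The main obstacle is the converse direction above. One must rule out any multicast solution that beats the knapsack optimum, which the cost lower bound $\delta\|\mathbf{W}_k\|_1/\hat R_k\ge\sum_{l:[\mathbf{W}_k]_l=1}\delta/R_l$ handles. A secondary obstacle is compatibility with the discrete MCS set $\mathcal{R}$; the fix is to treat $\mathcal{R}$ as part of the input. Alternatively, use weakly NP-hard knapsack with $r_m$ set to the distinct values $1/w_l$, which keeps the reduction polynomial.
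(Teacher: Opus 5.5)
Your reduction is correct, and it takes a genuinely different and more rigorous route than the paper. The paper also starts from 0-1 knapsack, but it restricts the \emph{decision variables} of $\mathcal{P}_1$ rather than the instance. It confines groups to a fixed candidate list $\{\mathcal{S}_1,\dots,\mathcal{S}_Z\}$ and freezes every $\mathbf{W}_k$ to a common pattern $\mathbf{W}'$. It then identifies the resulting group-selection problem, a weighted-coverage objective under a budget with costs $\delta\|\mathbf{W}'\|_1/\min_{n\in\mathcal{S}_j}R_n$, as a nonlinear knapsack. That argument is short and foregrounds the coverage structure across overlapping groups. However, it never shows that an unrestricted solution of the same instance cannot do better, and hardness of a feasible-set restriction does not transfer on its own. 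For example, if all users share one rate, the paper's restricted problem is weighted maximum coverage under a cardinality budget, yet $\mathcal{P}_1$ is then trivial: use one broadcast group carrying the grids with the largest $\sum_n [\mathbf{M}_n]_{h,w}$. Your construction (items $=$ grids, exactly one interested user per grid, $R_l=1/w_l$) is an actual instance of $\mathcal{P}_1$, and your latency lower bound supplies exactly the missing converse, so you obtain a genuine Karp reduction.

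Your insistence that $\mathcal{R}$ be part of the input is essential rather than cosmetic. Via Theorem~\ref{theorem:equivalence}, for a fixed number $M$ of MCS levels you can enumerate how many grids receive each rate (at most $(L+1)^M$ count vectors) and solve a bipartite assignment for each. So $\mathcal{P}_1$ is polynomial for fixed $M$, and hardness requires $M$ to grow with the instance, as it does in your construction.

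One small slip: the summary bound in your last paragraph should sum only over grids $l$ with $[\mathbf{W}_k]_{1,l}=1$ \emph{and} $l\in\mathcal{G}_k$. A grid sent to a group not containing user $l$ may cost less than $w_l$. Your main argument already uses the bound in this correct form.
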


\begin{IEEEproof}
    We prove the NP-hardness of $\mathcal{P}_1$ by showing that the 0-1 knapsack problem, which is well-known to be NP-hard~\cite{bretthauer2002nonlinear}, reduces to an instance of $\mathcal{P}_1$.

    In a standard 0-1 knapsack problem, we assume that there is a set of items $\mathcal{J}$, where each item $j$ has a profit value $v_j$ and a cost $w_j$. The goal is to select a subset of items to maximize total profit $\sum v_j x_j$ subject to a capacity constraint $\sum w_j x_j \leq C$, where $x_j \in \{0,1\}$.
    We construct an instance of $\mathcal{P}_1$ as follows. We limit the search space of the grouping strategy $\mathcal{G}_k$ to a fixed finite set of candidate groups $\mathcal{G} = \{\mathcal{S}_1, \mathcal{S}_2, \dots, \mathcal{S}_Z\}$. Furthermore, we fix the feature selection $\mathbf{W}_k$ for every group to a constant pattern $\mathbf{W}'$. Under these restrictions, the optimization problem reduces to selecting a subset of groups to maximize utility while satisfying the latency budget $T$.
    The mathematical formulation is
    \begin{subequations} \label{p:knap_restricted} 
    \begin{align} 
    \max_{x_j}~& \sum_{n=1}^N  \min\{1, \sum_{j=1}^{Z} \mathbb{I}(n\in \mathcal{S}_j) x_j\} \cdot \left\langle \mathbf{M}_n, \mathbf{W}' \right\rangle \label{eq:knap_obj}\\
    \text{s.t.}~&\sum_{j=1}^{Z} w_j x_j \leq C,
    \label{eq:knap_cap}\\
    &x_{j}\in\{0,1\},~\forall j \in {1, \dots, Z}, \label{eq:knap_bin} 
    \end{align} 
    \end{subequations}
    where $x_j$ is the binary decision variable for selecting group $\mathcal{S}_j$, $\mathbb{I}(n \in \mathcal{S}_j)$ denotes the indicator function, which equals 1 if user $n$ belongs to group $\mathcal{S}_j$ and 0 otherwise, and $w_j = {\delta \|\mathbf{W}'\|_1}/{\min_{n \in \mathcal{S}_j} \{R_n\}}$ represents the latency cost of group $\mathcal{S}_j$.

    We conclude that a restricted instance of $\mathcal{P}_1$ can be shown to be equivalent to a non-linear 0-1 knapsack problem, which is known to be NP-hard \cite{bretthauer2002nonlinear}. It inherently follows that our joint optimization problem $\mathcal{P}_1$ is also NP-hard.
\end{IEEEproof}

The optimization problem $\mathcal{P}_1$ shares the fundamental intractability of the knapsack problem. Its complexity is further compounded by the bilinear coupling between the group rate $\hat{R}_k$ and the feature selection $\mathbf{W}_k$ in the latency constraint. Moreover, the objective function is non-separable due to the redundancy check ($\max_{1 \le k \le K}\{\mathbf{W}_k\}$), which creates structural dependencies across different multicast groups. Due to this interdependency, it is infeasible to decompose the problem into independent subproblems, rendering standard dynamic programming \cite{bellman1966dynamic} computationally inapplicable. The inherent combinatorial hardness motivates us to transform the problem before developing the solution approach.

\section{Problem Transformation}
\label{sec:reformulation}

In this section, we transform the optimization problem $\mathcal{P}_1$ presented in Section \ref{sec:problem} from a grid-rate perspective and show that the optimal solution to the transformed problem corresponds directly to an optimal solution of the original.

Given the NP-hardness of the original problem $\mathcal{P}_1$ established in Theorem \ref{thm:np_hard}, obtaining a globally optimal solution is computationally intractable. Consequently, we seek to develop a low-complexity algorithm with an approximation guarantee by shifting the optimization perspective. 
Instead of explicitly partitioning users into groups, we transform the problem from a \textit{grid-rate perspective}. Let the BEV feature grids be indexed by $\mathcal{L} = \{1, \dots, L\}$, where $L = H \times W$ represents the total number of grids. For each grid $l \in \mathcal{L}$, the decision variable becomes selecting an optimal MCS rate $r_m \in \mathcal{R}$. We capture this by defining a binary decision matrix $\mathbf{X} \in \{0, 1\}^{L \times M}$, where $x_{l,m} = 1$ indicates that the $l$-th BEV grid is encoded at MCS rate $r_m$. 
This formulation imposes an \textit{implicit} grouping strategy: it is assumed that a BEV grid transmitted in rate $r_m$ is multicast to all users whose received SNR meets the corresponding decodability threshold. Specifically, we define the decodability indicator for user $n$ at rate $r_m$ as $\alpha_{n,m} = \mathbb{I}(\gamma_n \geq \gamma_{\mathrm{th}, m})$, and let $\mathbf{a}_n \in \{0,1\}^M$ be the vector of these indicators. Furthermore, let $\mathbf{m}_n \in \mathbb{R}^L$ denote the vectorized MoI mask $\mathbf{M}_n$. Therefore, the transformed optimization problem is given by:
\begin{subequations}\label{p:reform_multicast}
\begin{align}
   \mathcal{P}_2: \max_{\mathbf{X}} ~&
   \sum_{n=1}^N 
      \mathbf{m}_n^\top 
      \min\{\mathbf{1},\, \mathbf{X}\mathbf{a}_n\}
   \label{obj:feature_psf} \\
   \text{s.t.} ~&
   \sum_{l=1}^L \sum_{m=1}^M  
   \frac{\delta x_{l,m}}{B r_m} \leq T, 
   \label{cons:feature_latency}\\
   & \mathbf{X} \in \{0, 1\}^{L \times M}.
   \label{cons:feature_binary}
\end{align}
\end{subequations}
where $\top$ denotes the transpose operation and $\mathbf{1} \in \mathbb{R}^L$ represents an all-ones vector of length $L$. Constraint \eqref{cons:feature_latency} guarantees that the cumulative transmission time across all selected rates strictly adheres to the latency budget $T$. By leveraging these rate-specific decision variables, the complex original joint multicast grouping and feature selection problem is effectively transformed into a grid-rate allocation problem.

To justify the transformation, we develop the following lemma and theorem to show that solving $\mathcal{P}_2$ yields the optimal solution to $\mathcal{P}_1$.

\begin{lemma} \label{lemma:Maximal_Grouping}
    For any optimal solution to $\mathcal{P}_1$, expanding the optimal multicast group $\mathcal{G}_k^\star$ to include all users capable of decoding at its bottleneck MCS rate $r_m$, i.e., setting $\mathcal{G}_k^\star = \mathcal{N}_m$ where $\mathcal{N}_m = \{n \in \mathcal{N} \mid \gamma_n \geq \gamma_{\mathrm{th}, m}\}$, does not compromise the optimality.
\end{lemma}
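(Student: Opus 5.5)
Fix an arbitrary optimal solution $(K^\star,\{\mathbf{W}_k^\star,\mathcal{G}_k^\star,\hat{R}_k^\star\})$ of $\mathcal{P}_1$ and one index $k$. Let $r_m$ be the MCS level for which $\hat{R}_k^\star = \min_{n\in\mathcal{G}_k^\star} R_n = B r_m$. The plan is to replace $\mathcal{G}_k^\star$ by $\mathcal{N}_m$ while keeping $\mathbf{W}_k^\star$ and every other group unchanged. We then check that the result is feasible and that its objective value is at least the optimum. Since the optimum is already maximal, this forces equality. Applying the argument to one group at a time, for $k=1,\dots,K^\star$, then handles all groups. This works because each step only changes a single group and preserves both feasibility and the optimal value.

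\textbf{Step 1 (containment).} Every $n\in\mathcal{G}_k^\star$ satisfies $R_n \ge \hat{R}_k^\star = B r_m$. Hence $r_n^{\max}\ge r_m$, and by the definition of $\mathcal{C}_n$ and the monotone ordering of the thresholds, $\gamma_n\ge\gamma_{\mathrm{th},m}$. Therefore $\mathcal{G}_k^\star\subseteq\mathcal{N}_m$.

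\textbf{Step 2 (rate unchanged, feasibility).} Every $n\in\mathcal{N}_m$ can decode $r_m$, so $R_n\ge Br_m$. The bottleneck user of $\mathcal{G}_k^\star$ also lies in $\mathcal{N}_m$ and attains $Br_m$. Hence $\min_{n\in\mathcal{N}_m}R_n = Br_m=\hat{R}_k^\star$, so constraint \eqref{cons:orig_rate_def} holds with the same rate. Because neither $\mathbf{W}_k^\star$ nor $\hat{R}_k$ changes, the latency sum in \eqref{cons:orig_latency} is unchanged and still at most $T$, and \eqref{cons:orig_binary} holds trivially.

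\textbf{Step 3 (objective monotonicity).} For each user $n$, the effective mask $\max_{k}\{\mathbb{I}(n\in\mathcal{G}_k)\mathbf{W}_k\}$ can only increase entrywise. The indicator $\mathbb{I}(n\in\mathcal{G}_k)$ is nondecreasing under the enlargement $\mathcal{G}_k^\star\to\mathcal{N}_m$, and all other terms are fixed. Since $\mathbf{M}_n\ge 0$ entrywise (it is a product of a clipped nonnegative confidence map and binary masks), the Frobenius inner product is monotone in the mask. So the objective does not decrease.

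Combining Steps 2 and 3, the modified solution is feasible with objective value at least the optimum, and therefore it is also optimal.

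\textbf{Main obstacle.} The delicate point is Step 2, namely showing that enlarging the group does not lower the bottleneck rate. This relies on $\mathcal{N}_m$ being defined through the same threshold $\gamma_{\mathrm{th},m}$ that determines the bottleneck MCS, and on $R_n=Br_n^{\max}$ being monotone in $\gamma_n$. To make this precise, I would first state the equivalence $\gamma_n\ge\gamma_{\mathrm{th},m}\iff r_n^{\max}\ge r_m$. This equivalence requires the thresholds to be nondecreasing in $m$, which is implicit in the MCS model and which I would make explicit.

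A secondary subtlety is nonnegativity of $\mathbf{M}_n$. If the framework is generalized as in the Remark, the claim needs $\mathbf{M}_n\ge 0$, and I would note this as an assumption.
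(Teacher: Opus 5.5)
Your proof is correct and follows the same route as the paper. The paper argues in one sentence that replacing $\mathcal{G}_k^\star$ by $\mathcal{N}_m$ does not increase the latency and does not decrease the total utility; your Steps 1--3 fill in those details and correctly bring out the implicit assumptions that the thresholds $\gamma_{\mathrm{th},m}$ are nondecreasing in $m$ and that $\mathbf{M}_n$ is entrywise nonnegative.
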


\begin{IEEEproof}
    It is straightforward to show this because expanding the optimal multicast group to include all users capable of decoding its bottleneck MCS rate $r_m$, i.e., setting $\mathcal{G}_k^{\star} = \mathcal{N}_m$, does not increase the multicast latency but leads to a non-decreasing total utility.
\end{IEEEproof}

\begin{theorem} \label{theorem:equivalence}
\rev{ Problems $\mathcal{P}_1$ and $\mathcal{P}_2$ are equivalent.}
\end{theorem}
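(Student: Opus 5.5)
We prove the equivalence by showing that the optimal values of $\mathcal{P}_1$ and $\mathcal{P}_2$ coincide, with explicit maps carrying a feasible solution of either problem to a feasible solution of the other with no smaller objective. Throughout we use that $\mathbf{M}_n\ge 0$ entrywise and that the decodable sets are nested: $\alpha_{n,m}=1$ implies $\alpha_{n,m'}=1$ for all $m'\le m$, so $\mathcal{N}_M\subseteq\cdots\subseteq\mathcal{N}_1$.

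\emph{Step 1 ($\mathcal{P}_1 \Rightarrow \mathcal{P}_2$).} Take any feasible $\{K,\mathbf{W}_k,\mathcal{G}_k,\hat R_k\}$ of $\mathcal{P}_1$. Write $\hat R_k=Br_{m_k}$, where $m_k$ is the index of the bottleneck member's maximum MCS.

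\begin{itemize}
\item \textbf{Normalize the groups.} By Lemma~\ref{lemma:Maximal_Grouping}, replace each $\mathcal{G}_k$ by $\mathcal{N}_{m_k}$. This leaves the latency unchanged and does not decrease the utility.
\item \textbf{Merge groups with the same rate.} If two groups share a rate index, merge them and set $\mathbf{W}=\max\{\mathbf{W}_k,\mathbf{W}_{k'}\}$. Since $\|\max\{\mathbf{W}_k,\mathbf{W}_{k'}\}\|_1\le\|\mathbf{W}_k\|_1+\|\mathbf{W}_{k'}\|_1$, the latency can only drop, and the utility is unchanged.
\item \textbf{Define $\mathbf{X}$.} Set $x_{l,m}=[\mathbf{W}_k]_l$ if $m=m_k$ for some $k$, and $x_{l,m}=0$ otherwise.
\end{itemize}

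Feasibility then follows directly: the $\mathcal{P}_2$ latency $\sum_{l,m}\delta x_{l,m}/(Br_m)$ equals $\sum_k \delta\|\mathbf{W}_k\|_1/\hat R_k\le T$. For the objective, note that $n\in\mathcal{N}_{m_k}$ iff $\alpha_{n,m_k}=1$. Hence $\max_k\{\mathbb{I}(n\in\mathcal{G}_k)[\mathbf{W}_k]_l\}=\min\{1,\sum_m x_{l,m}\alpha_{n,m}\}$ entrywise. The two objectives are therefore equal, so $\mathrm{OPT}_2\ge\mathrm{OPT}_1$.

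\emph{Step 2 ($\mathcal{P}_2 \Rightarrow \mathcal{P}_1$).} Given a feasible $\mathbf{X}$, let $\mathcal{M}=\{m: \mathbf{X}_{:,m}\neq \mathbf{0}\}$.

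\begin{itemize}
\item \textbf{Drop groups no user can decode.} Discard any $m\in\mathcal{M}$ with $\mathcal{N}_m=\emptyset$. Such a column contributes nothing to the $\mathcal{P}_2$ objective, and removing it only lowers the latency.
\item \textbf{Build one group per rate.} For each remaining $m$, create a group with $\mathcal{G}=\mathcal{N}_m$ and $\mathbf{W}=\mathrm{reshape}(\mathbf{X}_{:,m})$.
\end{itemize}

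The hard point is constraint~\eqref{cons:orig_rate_def}: the bottleneck rate $\min_{n\in\mathcal{N}_m}R_n$ may strictly exceed $Br_m$, because no user's maximum MCS need equal $r_m$. In that case $\mathcal{P}_1$ assigns the group a rate at least $Br_m$. This only reduces the latency, so the constructed solution stays feasible, and the utility is unchanged because the group membership is the same. The identical entrywise identity from Step 1 shows the $\mathcal{P}_1$ objective equals the $\mathcal{P}_2$ value, so $\mathrm{OPT}_1\ge\mathrm{OPT}_2$.

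\emph{Conclusion.} Combining the two steps gives $\mathrm{OPT}_1=\mathrm{OPT}_2$. The maps are optimality-preserving: an optimal $\mathbf{X}^\star$ of $\mathcal{P}_2$ is mapped by Step 2 to a solution of $\mathcal{P}_1$ with value at least $\mathrm{OPT}_2=\mathrm{OPT}_1$, which is therefore optimal for $\mathcal{P}_1$. Step 1 gives the converse direction in the same way.
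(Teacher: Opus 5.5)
Your proof is correct and follows essentially the same route as the paper's. Each group is sent to the column $m_k$ of its bottleneck MCS rate, and conversely each column $m$ of $\mathbf{X}$ becomes the group $\mathcal{N}_m$ with that column (reshaped) as its selection matrix; your extra bookkeeping (expanding groups via Lemma~\ref{lemma:Maximal_Grouping}, merging equal-rate groups, dropping columns with $\mathcal{N}_m=\emptyset$, and noting that $\min_{n\in\mathcal{N}_m}R_n\ge Br_m$ may be strict) tightens spots where the paper asserts equalities that hold only as the harmless inequalities.
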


\begin{IEEEproof}
First, we demonstrate that any feasible solution to $\mathcal{P}_1$, denoted by $(K, \{\mathbf{W}_k, \mathcal{G}_k, \hat{R}_k\}_{k=1}^K)$, can be mapped to a feasible solution $\mathbf{X}$ for $\mathcal{P}_2$ with the same objective value. For each multicast group $k$, we identify the specific MCS rate index $m_k$ such that group $k$'s minimum rate is $\hat{R}_k = B r_{m_k}$. For every grid coordinate $(h,w)$ selected by group $k$ (i.e., $[\mathbf{W}_k]_{h,w} = 1$), we compute its flattened index $l$ and set $x_{l, m_k} = 1$. We first check the constraint holds: The total transmission time in $\mathcal{P}_2$ is bounded by $\sum_{l=1}^L \sum_{m=1}^M \frac{\delta x_{l,m}}{B r_m} \le \sum_{k=1}^K \frac{\delta \|\mathbf{W}_k\|_1}{\hat{R}_k} \le T$. This is because multiple groups in $\mathcal{P}_1$ might redundantly transmit the same grid at the same rate, whereas $\mathcal{P}_2$ counts unique grid-rate pairs. For the objective function, if a user $n$ receives grid $l$ via group $k$ in $\mathcal{P}_1$, we have $n \in \mathcal{G}_k$, implying that the maximum rate satisfies $R_n \ge \hat{R}_k$ and consequently $\alpha_{n, m_k} = 1$. In $\mathcal{P}_2$, since $x_{l, m_k} = 1$ and $\alpha_{n, m_k} = 1$, the inner product evaluates to $\min\{\mathbf{1}, \mathbf{X} \mathbf{a}_n\} = 1$, indicating reception of grid $l$. Hence, the objective value of $\mathcal{P}_2$ matches that of $\mathcal{P}_1$.

Next, we show the converse: any feasible solution $\mathbf{X}$ to $\mathcal{P}_2$ can be transformed into a feasible solution for $\mathcal{P}_1$ with the same objective value. For solution mapping, we construct $K = M$ multicast groups. For each rate index $m \in \{1, \dots, M\}$, we define a group $\mathcal{G}_m = \{n \in \mathcal{N} \mid \gamma_n \ge \gamma_{\mathrm{th},m}\}$, ensuring every included user can decode at rate $\hat{R}_m = B r_m$. We then map the $m$-th column of $\mathbf{X}$ to the spatial feature selection matrix $\mathbf{W}_m \in \{0,1\}^{H \times W}$, setting $[\mathbf{W}_m]_{h,w} = x_{l,m}$ where $l = (h-1)W + w$. Since each constructed group $m$ operates at rate $\hat{R}_m$, the total transmission time equals the latency cost of $\mathbf{X}$ in $\mathcal{P}_2$: $\sum_{m=1}^M \frac{\delta \|\mathbf{W}_m\|_1}{\hat{R}_m} = \sum_{m=1}^M \sum_{l=1}^L \frac{\delta x_{l,m}}{B r_m} \le T$. Therefore, the latency constraint of $\mathcal{P}_1$ is satisfied. Furthermore, if user $n$ successfully receives grid $l$ in $\mathcal{P}_2$ (meaning there exists an $m$ such that $x_{l,m} = 1$ and $\alpha_{n,m} = 1$), our mapping guarantees that $[\mathbf{W}_m]_{h,w} = 1$ and $n \in \mathcal{G}_m$, ensuring successful reception in $\mathcal{P}_1$. Hence, the objective value of $\mathcal{P}_1$ matches that of $\mathcal{P}_2$.

Since any feasible solution in either formulation can be trivially mapped to a feasible solution in the other with the same objective value, $\mathcal{P}_1$ and $\mathcal{P}_2$ are equivalent.
\end{IEEEproof}

\begin{figure}
    \centering
    \includegraphics[width=1.0\linewidth]{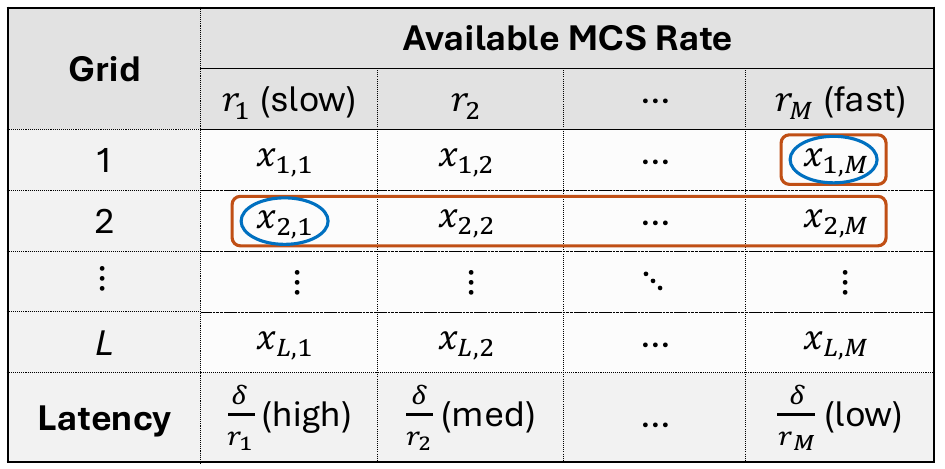}
    \caption{Illustration of the \textit{grid-rate} transformation. The rows represent BEV grids, while the columns correspond to the available MCS rates. The blue circles indicate the selected rates for a given grid. The red rectangle indicates that selecting a specific grid-rate pair also implicitly selects the pairs with higher rates in the same column.}
    \label{table:reformulation}
    \vspace{-0.0cm}
\end{figure}

By the equivalence shown in Theorem \ref{theorem:equivalence}, the transformed problem $\mathcal{P}_2$ is also NP-hard, since an exact solution to $\mathcal{P}_2$ immediately yields an exact solution to $\mathcal{P}_1$. Despite this computational intractability, the structure of $\mathcal{P}_2$ enables us to develop an efficient approximation algorithm. To facilitate approximate solution finding, we visualize the \textit{grid-rate} transformation in Fig. \ref{table:reformulation}. The rows correspond to BEV grids ($1, \dots, L$), while the columns enumerate the available MCS rates, ranging from $r_1$ (slowest) to $r_M$ (fastest). The element $x_{l,m}$ represents the binary decision variable for assigning rate $r_m$ to grid $l$. Notably, a grid is skipped if no MCS rate is selected for it. This formulation highlights an inherent system trade-off: selecting a robust, low rate like $r_1$ maximizes user accessibility while incurring more latency penalty. Conversely, choosing a higher rate like $r_M$ minimizes latency but yields lower system utility.  


\section{Algorithm Design}
\label{sec:algorithm}
In this section, we present the algorithm design for the transformed problem $\mathcal{P}_2$ presented in Section \ref{sec:reformulation}. Although $\mathcal{P}_2$ is still computationally intractable, we can leverage it to find an approximate solution. By establishing the submodularity of the problem, we develop an efficient algorithm with a provable approximation guarantee.

\subsection{Submodularity Analysis}
\label{subsec:submodularity}

Our transformed problem $\mathcal{P}_2$ transforms the original problem into a set optimization problem. In particular, it aims to select an optimal subset from a ground set to maximize an objective function that, as we show below, is monotone submodular.

\begin{definition}[Monotone Submodularity]\label{def:submodular}
Let $\mathcal{E}$ be a finite ground set and let $f: 2^{\mathcal{E}} \rightarrow \mathbb{R}_{\geq 0}$ be a set function. 
For any subset $\mathcal{S} \subseteq \mathcal{E}$ and element $e \in \mathcal{E} \setminus \mathcal{S}$, define the marginal gain of $e$ with respect to $\mathcal{S}$ as
\begin{equation}
    \Delta_e f(\mathcal{S}) \triangleq f(\mathcal{S} \cup \{e\}) - f(\mathcal{S}).
\end{equation}
The function $f$ is said to be \emph{submodular} if it satisfies the diminishing-returns property
\begin{equation}
    \Delta_e f(\mathcal{A}) \geq \Delta_e f(\mathcal{B}), \quad \forall\, \mathcal{A} \subseteq \mathcal{B} \subseteq \mathcal{E}, \; e \in \mathcal{E} \setminus \mathcal{B}.
\end{equation}
If, in addition, $f$ is monotone non-decreasing, i.e., $f(\mathcal{A}) \leq f(\mathcal{B})$ for all $\mathcal{A} \subseteq \mathcal{B} \subseteq \mathcal{E}$, then $f$ is called a \emph{monotone submodular} function.
\end{definition}

\begin{theorem} \label{theo:submodular}
    Problem $\mathcal{P}_2$ is a monotone submodular maximization problem subject to a knapsack constraint.
\end{theorem}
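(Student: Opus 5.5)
We prove Theorem~\ref{theo:submodular} by recasting $\mathcal{P}_2$ as a set-function maximization over the ground set of grid-rate pairs, $\mathcal{E} = \{(l,m) \mid l \in \mathcal{L}, m \in \{1,\dots,M\}\}$. Every binary matrix $\mathbf{X}$ is identified with $\mathcal{S} = \{(l,m) \mid x_{l,m}=1\}$. Under this identification the objective~\eqref{obj:feature_psf} becomes
\begin{equation*}
f(\mathcal{S}) = \sum_{n=1}^N \sum_{l=1}^L [\mathbf{m}_n]_l \, \mathbb{I}\big(\exists m : (l,m)\in\mathcal{S},\ \alpha_{n,m}=1\big),
\end{equation*}
since $\min\{1, \sum_m x_{l,m}\alpha_{n,m}\}$ equals $1$ exactly when some selected rate for grid $l$ is decodable by user $n$. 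The constraint~\eqref{cons:feature_latency} is linear, $\sum_{(l,m)\in\mathcal{S}} c_{l,m} \le T$, with nonnegative costs $c_{l,m} = \delta/(B r_m)$. This is a knapsack constraint, which settles the constraint part of the claim. Nonnegativity, $f(\mathcal{S})\ge 0$, follows because $\mathbf{M}_n \in [0,1]^{H\times W}$.

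The main step is to recognize $f$ as a weighted coverage function. For each pair $(n,l)$, define a covered element $u_{n,l}$ with weight $[\mathbf{m}_n]_l \ge 0$. Each ground element $(l,m)$ covers the set $\mathcal{U}_{l,m} = \{u_{n,l} \mid \alpha_{n,m}=1\}$, and $f(\mathcal{S})$ is the total weight of $\bigcup_{e\in\mathcal{S}} \mathcal{U}_e$. To keep the argument self-contained, we will verify the properties in Definition~\ref{def:submodular} directly rather than cite the general coverage result.

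\emph{Monotonicity.} Take $\mathcal{A}\subseteq\mathcal{B}$. The union of covered sets for $\mathcal{A}$ is contained in that for $\mathcal{B}$, and all weights are nonnegative, so $f(\mathcal{A})\le f(\mathcal{B})$.

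\emph{Submodularity.} For $e=(l,m)\notin\mathcal{B}$, the marginal gain is
\begin{equation*}
\Delta_e f(\mathcal{S}) = \sum_{n:\ \alpha_{n,m}=1} [\mathbf{m}_n]_l\, \mathbb{I}\big(\nexists m' : (l,m')\in\mathcal{S},\ \alpha_{n,m'}=1\big).
\end{equation*}
Each indicator is non-increasing in $\mathcal{S}$: if $\mathcal{A}\subseteq\mathcal{B}$ and no decodable pair for user $n$ and grid $l$ lies in $\mathcal{B}$, then none lies in $\mathcal{A}$. Summing these term-wise inequalities with nonnegative weights gives $\Delta_e f(\mathcal{A}) \ge \Delta_e f(\mathcal{B})$.

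The only delicate point is the first step. The $\min\{\mathbf{1},\cdot\}$ in~\eqref{obj:feature_psf} must be read correctly as an OR over rates for each $(n,l)$ pair, which turns the objective into a coverage indicator. Once that is done, both properties follow term by term. We may also note the nested structure $\mathcal{N}_{m}\supseteq\mathcal{N}_{m+1}$, but the argument does not need it.

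Finally, combining monotone submodularity of $f$ with the linear knapsack constraint establishes Theorem~\ref{theo:submodular}.
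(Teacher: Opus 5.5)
Your proof is correct and follows essentially the same route as the paper: you identify $\mathbf{X}$ with a subset of grid-rate pairs, read $\min\{\mathbf{1},\mathbf{X}\mathbf{a}_n\}$ as a per-$(n,l)$ coverage indicator, and derive monotonicity and diminishing returns term by term from the marginal gain $\sum_{n:\alpha_{n,m}=1}[\mathbf{m}_n]_l\,(1-y_{n,l}(\mathcal{S}))$, with the latency constraint serving as the knapsack. Your weighted-coverage framing is a cleaner name for the same argument rather than a different one.
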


\begin{IEEEproof}
We show that the objective of problem~$\mathcal{P}_2$ is a monotone submodular function.
Let the ground set of all possible grid-rate pairs be defined as
\begin{equation}
    \mathcal{E} \triangleq \{(l,m) \mid l \in \mathcal{L}\}, m \in \{1,\ldots,M\}\}.
\end{equation}
For any subset $\mathcal{S} \subseteq \mathcal{E}$, we construct the binary decision matrix $\mathbf{X}(\mathcal{S}) \in \{0,1\}^{L \times M}$ such that $x_{l,m} = 1$ if and only if $(l,m) \in \mathcal{S}$.
For user $n$, let 
\begin{equation}
    \mathbf{y}_n(\mathcal{S}) \triangleq \min\{\mathbf{1}, \mathbf{X}(\mathcal{S})\mathbf{a}_n\} \in \{0,1\}^L
\end{equation}
where the $l$-th element, $y_{n,l}(\mathcal{S})$, equals $1$ if user $n$ receives BEV feature grid $l$ at a decodable rate selected in $\mathcal{S}$, and $0$ otherwise. 
In other words, $y_{n,l}(\mathcal{S}) = 1$ if there exists at least one pair $(l,m) \in \mathcal{S}$ such that $a_{n,r} = 1$.

The objective function \eqref{obj:feature_psf} can thus be expressed as a set function $f: 2^{\mathcal{E}} \rightarrow \mathbb{R}_{\geq 0}$
\begin{equation}
    f(\mathcal{S}) \triangleq \sum_{n=1}^N \mathbf{m}_n^\top \mathbf{y}_n(\mathcal{S}) = \sum_{n=1}^N \sum_{l=1}^L m_{n,l} y_{n,l}(\mathcal{S})
\end{equation}
where $m_{n,l} \geq 0$ is the $l$-th element of the MoI vector $\mathbf{m}_n$.

We now prove that $f(\mathcal{S})$ is monotone submodular.

\textit{Monotonicity.}
Consider two sets $\mathcal{A} \subseteq \mathcal{B} \subseteq \mathcal{E}$. If a grid $l$ is decodable for user $n$ under subset $\mathcal{A}$, it remains decodable under $\mathcal{B}$. Therefore, $y_{n,l}(\mathcal{A}) \leq y_{n,l}(\mathcal{B})$ for all $n$ and $l$. Since the weights are non-negative ($m_{n,l} \geq 0$), it directly follows that $f(\mathcal{A}) \leq f(\mathcal{B})$. Thus, $f(\mathcal{S})$ is monotonically non-decreasing.


\textit{Submodularity.}
To prove submodularity, we must show that the marginal gain diminishes as the set grows. Let $\mathcal{A} \subseteq \mathcal{B} \subseteq \mathcal{E}$, and consider a new element $e = (l', m') \in \mathcal{E} \setminus \mathcal{B}$. 
The marginal gain of adding $e$ to a set $\mathcal{S}$ is defined as 
\begin{equation}
    \Delta_e f(\mathcal{S}) = f(\mathcal{S} \cup \{e\}) - f(\mathcal{S}).
\end{equation}

Adding $e = (l', m')$ only potentially changes the coverage status of grid $l'$, and only for users who can decode at rate $m'$, i.e., users with $a_{n,m'} = 1$. For any user $n$ capable of decoding at rate $m'$, the increase in the objective function is $m_{n,l'}$ if grid $l'$ was not previously covered, and $0$ if it was already covered. Thus, the total marginal gain for a set $\mathcal{S}$ is
\begin{equation}
    \Delta_e f(\mathcal{S}) = \sum_{n: a_{n,m'} = 1} m_{n,l'} \big(1 - y_{n,l'}(\mathcal{S})\big).
\end{equation}
Since $\mathcal{A} \subseteq \mathcal{B}$, we already have $y_{n,l'}(\mathcal{A}) \leq y_{n,l'}(\mathcal{B})$. Consequently, it holds that
\begin{equation}
    1 - y_{n,l'}(\mathcal{A}) \geq 1 - y_{n,l'}(\mathcal{B}).
\end{equation}
Multiplying both sides by the non-negative weight $m_{n,l'}$ and summing over all applicable users $n$ yields
\begin{equation}
    \Delta_e f(\mathcal{A}) \geq \Delta_e f(\mathcal{B}).
\end{equation}
Therefore, $f(\mathcal{S})$ is a monotone submodular function. Considering (\ref{cons:feature_latency}) is a knapsack constraint, we can conclude that Problem $\mathcal{P}_2$ is a monotone submodular maximization subject to a knapsack constraint. 
This completes the proof.

\end{IEEEproof}

\subsection{Greedy Algorithm\label{subsec:standard_greedy}}
It is well known that the greedy procedure is highly effective in solving monotone submodular maximization problems \cite{khuller1999budgeted}. To this end, we design a greedy procedure, which prioritizes transmissions providing the highest marginal utility over unit latency cost. We first define the ground set of all candidate transmission items as
\begin{equation}
\mathcal{E} \triangleq \{(l,m) \mid l \in \mathcal{L}, m\in \{1,2,\ldots, M\}\},
\end{equation}
where each item $e =(l,m)$ corresponds to transmitting the $l$-th BEV feature grid at rate $r_m$. Let $\mathcal{S} \subseteq \mathcal{E}$ denote the set of items that have already been selected in the current iteration. To evaluate the total utility, we define the equivalent function $F: 2^{\mathcal{E}} \rightarrow \mathbb{R}_{\geq 0}$, corresponding to the objective function in $\mathcal{P}_2$, as:
\begin{equation}
F(\mathcal{S}) = \sum_{n=1}^N \sum_{l=1}^L [\mathbf{m}_n]_l \cdot \min \left\{ 1, \sum_{(l, m') \in \mathcal{S}} \alpha_{n,m'} \right\}.
\end{equation}
The marginal gain of adding a candidate item $e \in \mathcal{E} \setminus \mathcal{S}$ is given by:
\begin{equation}
\Delta_e F(\mathcal{S}) = F(\mathcal{S}\cup \{e\})-F(\mathcal{S}).  
\end{equation}
Moreover, we define the marginal utility ratio $\rho_e(\mathcal{S})$ as:
\begin{equation}
\rho_e(\mathcal{S}) \triangleq \frac{\Delta_e F(\mathcal{S})}{c_e},
\end{equation}
where $c_e = \frac{\delta}{B r_m}$ represents the transmission latency of item $e$. In each iteration, the algorithm greedily selects the item $e^\star$ that maximizes this ratio. This process continues iteratively until the latency budget $T$ is exhausted, yielding the resulting set $\mathcal{S}_{\mathrm{greedy}}$.

If a grid $l$ is transmitted at two distinct rates $r_m < r_{m'}$, the higher-rate transmission becomes redundant. Consequently, an optimal decision matrix $\mathbf{X}^\star$ satisfies $\sum_m x_{l,m} \leq 1$ for every $l \in \mathcal{L}$. The standard greedy procedure, however, may first select a high-rate item $(l,m')$ for its favorable marginal utility ratio and later add a lower-rate item $(l,m)$ to serve additional users, thereby incurring a redundant cost $c_{(l,m')}$. 
To eliminate this inefficiency, we remove some grid-rate selections. Specifically, for every grid $l$ assigned multiple rates, we retain only the lowest-rate item to recover redundant latency costs. Then, since the latency decreases, we involve the same greedy procedure to select new grid-rate items. We perform this consolidation and reinvestment step once to fill the remaining latency budget $T$. 
This process leads to non-decreasing total utility, ultimately yielding the refined solution $\mathcal{S}_{\mathrm{greedy}}^\dagger$. Therefore, we have $
    F(\mathcal{S}_{\mathrm{greedy}}^\dagger) \geq  F(\mathcal{S}_{\mathrm{greedy}}).$

To address the limitation of greedy algorithms for high-cost items, we incorporate a ``best-single-item'' check \cite{sviridenko2004note}. In particular, the final solution $\mathcal{S}^\star$ is obtained by comparing the greedy solution $\mathcal{S}_{\mathrm{greedy}}^\dag$ against the best feasible single item $\mathcal{S}_{\mathrm{single}}$, i.e.,
\begin{equation}
\mathcal{S}^\star = {\arg \max}_{\mathcal{S} \in \{\mathcal{S}_{\mathrm{greedy}}^\dag, \mathcal{S}_{\mathrm{single}}\}} F(\mathcal{S}),
\end{equation}
where $\mathcal{S}_{\mathrm{single}} = {\arg \max}_{\{e \in \mathcal{E} \mid c_e \leq T\}} F(\{e\})$. 
Finally, we reconstruct the optimal binary decision matrix $\mathbf{X}^\star$ from the final solution $\mathcal{S}^\star$ as follows
\begin{equation} \label{eq:reconstruct_x}
    x_{l,m}^\star=
    \begin{cases}
        1,&\text{if }(l,m)\in\mathcal{S}^\star\\0,&\text{otherwise}
    \end{cases}
\end{equation}
Then, we map the obtained $\mathbf{X}^\star$ to the solution to the original problem $\mathcal{P}_1$ as
\begin{equation} 
\begin{aligned}
&K^\star = M,\\
&\mathcal{G}_k^\star = \{n \in \mathcal{N} \mid \alpha_{n,k} = 1\},  \quad \forall k  \in \{1,\dots,K^\star\},\\
&[\mathbf{W}_k^\star]_{h,w} = \mathbf{X}_{(h-1)W+w,\, k}^\star, \\ 
& \quad \quad \quad \quad \quad \quad \forall h \in \{1,\dots,H\}, \forall w \in \{1,\dots,W\},\\
& \hat{R}_k^\star = \min_{n \in \mathcal{G}_k^\star} \{R_n\}, \quad \forall k \in \{1,\dots, K^\star\}.
\end{aligned} \label{eq:map_back} 
\end{equation}
The complete procedure is detailed in Algorithm \ref{alg:greedy}.

\begin{algorithm}[t]
    \caption{Refined Greedy Algorithm}
    \label{alg:greedy}
    \LinesNumbered
    \KwIn{$\mathcal{E}$, $T$, $c_e, \forall e \in \mathcal{E}$}
    \KwOut{$K^\star, \mathcal{G}_k^\star, \mathbf{W}_k^\star$, $\hat{R}_k^\star$}
    
    Initialize $\mathcal{S}_{\mathrm{greedy}}^\dag \leftarrow \emptyset$, $T_{\mathrm{rem}} \leftarrow T$, $\mathcal{C} \leftarrow \mathcal{E}$\;
    
    \For{$iter \leftarrow 1$ \KwTo $2$}{
        
        \tcp{Greedy Selection}
        \While{$\mathcal{C} \neq \emptyset$ \textnormal{\textbf{and}} $T_{\mathrm{rem}} > 0$}{
            $e^\star \leftarrow \operatorname*{argmax}_{e \in \mathcal{C}} \frac{F(\mathcal{S}_{\mathrm{greedy}}^\dag \cup \{e\}) - F(\mathcal{S}_{\mathrm{greedy}}^\dag)}{c_e}$\;
            
            \If{$F(\mathcal{S}_{\mathrm{greedy}}^\dag \cup \{e^\star\}) - F(\mathcal{S}_{\mathrm{greedy}}^\dag) \leq 0$}{
                \textbf{break}\;
            }
            
            \If{$c_{e^\star} \leq T_{\mathrm{rem}}$}{
                $\mathcal{S}_{\mathrm{greedy}}^\dag \leftarrow \mathcal{S}_{\mathrm{greedy}}^\dag \cup \{e^\star\}$\;
                $T_{\mathrm{rem}} \leftarrow T_{\mathrm{rem}} - c_{e^\star}$\;
            }
            $\mathcal{C} \leftarrow \mathcal{C} \setminus \{e^\star\}$\;  
        }

        \If{$iter = 1$}{
            \tcp{Redundant item removal}
            $\Omega \leftarrow \bigcup_{l \in \mathcal{L}} \big\{ (l,m) \in \mathcal{S}_{\mathrm{greedy}}^\dag \mid m \neq \operatorname*{argmin}_{\{m' \mid (l,m') \in \mathcal{S}_{\mathrm{greedy}}^\dag\}} r_{m'} \big\}$\;
            $\mathcal{S}_{\mathrm{greedy}}^\dag \leftarrow \mathcal{S}_{\mathrm{greedy}}^\dag \setminus \Omega$\;
            $\Delta T \leftarrow \sum_{e \in \Omega} c_e$\;
            
            \If{$\Delta T \le 0$}{
                \textbf{break}\;
            }
            $T_{\mathrm{rem}} \leftarrow T_{\mathrm{rem}} + \Delta T$\;
            $\mathcal{C} \leftarrow \mathcal{E} \setminus \mathcal{S}_{\mathrm{greedy}}^\dag$\;
        }
    }
    
    \tcp{Best-single-item Check \& Output}
    $\mathcal{S}_{\mathrm{single}} = \operatorname{argmax}_{\{e \in \mathcal{E} \mid c_e \leq T\}} F(\{e\})$\;
    $\mathcal{S}^\star \leftarrow \operatorname*{argmax}_{\mathcal{S} \in \{\mathcal{S}_{\mathrm{greedy}}^\dag, \mathcal{S}_{\mathrm{single}}\}} F(\mathcal{S})$\;
    Reconstruct $\mathbf{X}^\star$ from $\mathcal{S}^\star$ via \eqref{eq:reconstruct_x}\;
    Map the solution $K^\star, \mathcal{G}_k^\star, \mathbf{W}_k^\star, \hat{R}_k^\star$ to $\mathcal{P}_1$ via \eqref{eq:map_back}\;
    
    \Return $K^\star, \mathcal{G}_k^\star, \mathbf{W}_k^\star, \hat{R}_k^\star$
\end{algorithm}

\begin{theorem}
    The proposed algorithm \ref{alg:greedy} yields a $(1-1/\sqrt{e})$-approximate solution to $\mathcal{P}_1$ and $\mathcal{P}_2$ in the worst case.
\end{theorem}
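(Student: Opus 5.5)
The plan is to reduce everything to the classical analysis of cost-benefit greedy with a best-single-item check for monotone submodular maximization under a knapsack constraint (Khuller--Moss--Naor, as refined by Sviridenko and later by Tang et al.), and then transfer the guarantee to $\mathcal{P}_1$ through Theorem~\ref{theorem:equivalence}. By Theorem~\ref{theo:submodular}, $F$ is monotone submodular on $\mathcal{E}$ with $F(\emptyset)=0$, and (\ref{cons:feature_latency}) is a knapsack constraint with costs $c_e=\delta/(Br_m)$ and budget $T$. Let $\mathrm{OPT}$ denote the optimum of $\mathcal{P}_2$, which by Theorem~\ref{theorem:equivalence} also equals the optimum of $\mathcal{P}_1$.

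First I analyse the first greedy pass alone, before the redundancy removal. Order the items added as $e_1,e_2,\dots$ with prefix sets $S_i$. Let $e_{j+1}$ be the first item that has the largest ratio among the remaining candidates but is rejected because $c_{e_{j+1}}>T_{\mathrm{rem}}$. For every $i\le j+1$, submodularity and monotonicity give
\begin{equation*}
\mathrm{OPT}-F(S_{i-1})\le \sum_{o\in O\setminus S_{i-1}}\Delta_oF(S_{i-1})\le T\,\frac{\Delta_{e_i}F(S_{i-1})}{c_{e_i}} ,
\end{equation*}
where $O$ is an optimal set. This uses that every element of $O$ still in $\mathcal{C}$ has ratio no larger than that of $e_i$. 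Elements of $O$ already discarded for exceeding the budget are dominated by a standard argument: at the time one was discarded it was the argmax, and such events are absorbed because we stop the accounting at the first rejection.

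Unrolling the recurrence gives $F(S_j\cup\{e_{j+1}\})\ge (1-\prod_i(1-c_{e_i}/T))\,\mathrm{OPT}\ge (1-1/e)\,\mathrm{OPT}$, since $\sum_{i\le j+1}c_{e_i}>T$. Submodularity then splits this as $F(S_j)+F(\{e_{j+1}\})\ge(1-1/e)\mathrm{OPT}$. The remaining step is to upgrade this to $\max\{F(S_j),F(\{e_{j+1}\})\}\ge(1-1/\sqrt e)\mathrm{OPT}$, following the Tang et al.\ refinement rather than the naive halving, which would only give $\frac12(1-1/e)$. I would set $x=c_{e_{j+1}}/T$ and bound $F(S_j)\ge(1-e^{-(1-x)})\mathrm{OPT}$ when the budget consumed by $S_j$ is at least $(1-x)T$. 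Separately, $F(\{e_{j+1}\})$ is at least its ratio times its cost, which is at least $x$ times the residual gap. Minimizing the larger of the two bounds over $x\in[0,1]$ gives the worst case at $x=1/2$, with value $1-e^{-1/2}$. The singleton $e_{j+1}$ is feasible alone, so $F(\mathcal{S}_{\mathrm{single}})\ge F(\{e_{j+1}\})$. If no rejection ever occurs, greedy exhausts all positive-gain items and the bound holds trivially.

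Next, the refinement step must not destroy the bound. Removing a higher-rate duplicate $(l,m')$ when $(l,m)$ with $r_m<r_{m'}$ is kept leaves $F$ unchanged, because every user with $\alpha_{n,m'}=1$ also has $\alpha_{n,m}=1$. Re-running greedy then only adds items, so monotonicity gives $F(\mathcal{S}^\dagger_{\mathrm{greedy}})\ge F(\mathcal{S}_{\mathrm{greedy}})\ge F(S_j)$, and feasibility is preserved because the budget is only refilled by freed cost. Hence $F(\mathcal{S}^\star)\ge\max\{F(S_j),F(\mathcal{S}_{\mathrm{single}})\}\ge(1-1/\sqrt e)\mathrm{OPT}$. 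Mapping $\mathcal{S}^\star$ back via (\ref{eq:map_back}) yields a feasible solution of $\mathcal{P}_1$ with the same objective, by the converse direction of Theorem~\ref{theorem:equivalence}, which gives the claim for $\mathcal{P}_1$.

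The main obstacle is the $x$-balancing step that yields $1-1/\sqrt e$ instead of $\frac12(1-1/e)$. The delicate part is the lower bound on $F(\{e_{j+1}\})$ in terms of $x$ and the residual $\mathrm{OPT}-F(S_j)$: the ratio inequality must be chained so that both bounds are expressed in terms of the same gap. I would first try to cite Tang et al.'s lemma directly, verifying only that Algorithm~\ref{alg:greedy}'s skip-and-continue rule matches their greedy variant.
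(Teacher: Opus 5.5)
Your skeleton matches the paper's: take the modified-greedy (best of greedy and best single item) guarantee for the first pass, then transfer it to $\mathcal{P}_1$ through Theorem~\ref{theorem:equivalence}. In between, you note that consolidation and reinvestment cannot decrease $F$. Dropping $(l,m')$ in favour of $(l,m)$ with $r_m<r_{m'}$ leaves $F$ unchanged, since $\alpha_{n,m'}=1$ implies $\alpha_{n,m}=1$, and the second pass only adds items. The paper gets the constant purely by citing Khuller et al.

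\textbf{The gap.} Where you derive the constant yourself, the argument breaks. The two inequalities you balance, $F(S_j)\ge(1-e^{-(1-x)})\,\mathrm{OPT}$ and $F(\{e_{j+1}\})\ge x\,(\mathrm{OPT}-F(S_j))$, do not yield $1-e^{-1/2}$, and the claim that the worst case is at $x=1/2$ is false. With $a=F(S_j)/\mathrm{OPT}$, the most these two constraints guarantee is
\[
\min_{a\ge 1-e^{-(1-x)}}\max\{a,\;x(1-a)\}=\max\Bigl\{\tfrac{x}{1+x},\;1-e^{-(1-x)}\Bigr\}.
\]
Its minimum over $x\in[0,1]$ is about $0.358$, attained near $x\approx 0.557$. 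This is Wolsey's constant $1-e^{-\beta}$ with $e^{\beta}=2-\beta$.

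A concrete instance: at $x=0.6$, the values $F(S_j)=F(\{e_{j+1}\})=0.375\,\mathrm{OPT}$ satisfy both of your inequalities, yet $0.375<1-1/\sqrt e\approx 0.393$.

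This is exactly the gap in Khuller et al.'s original argument for $1-1/\sqrt e$. Tang et al.\ (2021) closed it with a substantially finer analysis, giving $0.405$, which uses more than these two summary bounds (e.g., how the optimal set splits around the large rejected item). So your self-contained route does not prove the theorem; it goes through only via your fallback of citing Tang et al. That citation does apply: their modified greedy is the skip-and-continue variant, which is what the first pass of Algorithm~\ref{alg:greedy} is, up to the harmless early break on nonpositive gain. With that citation your proof coincides with the paper's, and is better sourced than the paper's appeal to Khuller et al.

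\textbf{Smaller points.}
\begin{itemize}
\item Algorithm~\ref{alg:greedy} does not discard items with $c_e>T$ in advance, so the first rejected item may be infeasible on its own. Then $F(\mathcal{S}_{\mathrm{single}})\ge F(\{e_{j+1}\})$ fails, because $\mathcal{S}_{\mathrm{single}}$ ranges only over $c_e\le T$. Define $e_{j+1}$ as the first rejected item with $c_e\le T$ instead. Earlier-rejected items with $c_e>T$ never lie in $O$, so $O\setminus S_{i-1}\subseteq\mathcal{C}$ still holds up to that point.
\item The no-rejection case is not trivial when the loop ends with $T_{\mathrm{rem}}=0$. There you need the recurrence, which gives $1-1/e$.
\end{itemize}
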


\begin{IEEEproof}
    Since the transformed problem $\mathcal{P}_2$ is an instance of monotone submodular maximization subject to a knapsack constraint, Khuller \textit{et al.} \cite{khuller1999budgeted} have shown that selecting the better of the standard greedy solution and the single best item guarantees a theoretical lower bound of 
    \begin{equation}
        \max \{F(\mathcal{S}_{\mathrm{greedy}}), F(\mathcal{S}_{\mathrm{single}})\} \geq \left(1 - \frac{1}{\sqrt{e}}\right) F(\mathcal{S}^{\mathrm{OPT}}),
    \end{equation}
    where $\mathcal{S}^{\mathrm{OPT}}$ denotes the optimal set in $\mathcal{P}_2$.
    
    Our proposed approach augments the standard greedy procedure with an in-grid rate consolidation mechanism, yielding the refined solution $\mathcal{S}_{\mathrm{greedy}}^\dag$, which satisfies $F(\mathcal{S}_{\mathrm{greedy}}^\dag) \geq F(\mathcal{S}_{\mathrm{greedy}})$. Recall that our final solution is $\mathcal{S}^\star = {\arg \max}_{\mathcal{S} \in \{\mathcal{S}_{\mathrm{greedy}}^\dag, \mathcal{S}_{\mathrm{single}}\}} F(\mathcal{S})$, and it directly follows that
    \begin{equation}
        \begin{aligned}
            F(\mathcal{S}^\star) &= \max \left\{ F(\mathcal{S}_{\mathrm{greedy}}^\dag), F(\mathcal{S}_{\mathrm{single}}) \right\} \\
            &\geq \max \left\{ F(\mathcal{S}_{\mathrm{greedy}}), F(\mathcal{S}_{\mathrm{single}}) \right\} \\
            &\geq \left(1 - \frac{1}{\sqrt{e}}\right) F(\mathcal{S}^{\mathrm{OPT}}).       
        \end{aligned}
    \end{equation}
    Furthermore, since $\mathcal{P}_1$ and $\mathcal{P}_2$ are equivalent, the approximate ratio also applies to $\mathcal{P}_1$, which concludes the proof.
\end{IEEEproof}

\begin{remark}
    It is important to note that the theoretical bound of $1-1/\sqrt{e}$ occurs only in worst-case scenarios. In practical large-scale systems where the budget $T$ is large relative to the cost of individual items ($c_e \ll T$), the performance of the greedy algorithm improves significantly, asymptotically approaching the tighter $1 - 1/e \approx 0.63$ guarantee, and often nearing the global optimum when the objective function exhibits low curvature. 
\end{remark}

\emph{Computational Complexity:}
The ground set contains $|\mathcal{E}| = L \times M$ candidate items, where $L = H \times W$ is the total number of grids. In the worst case, the greedy algorithm iterates up to $|\mathcal{E}|$ times, computing the marginal gain for at most $|\mathcal{E}|$ items in each iteration. Moreover, evaluating the marginal gain $\Delta_e F(\mathcal{S})$ requires a summation over the $N$ users for a specific grid $l$, resulting in a complexity of $O(N)$ per item evaluation. The greedy algorithm, therefore, requires $O(N \cdot |\mathcal{E}|^2)$ operations. 
Furthermore, the subsequent reinvestment phase invokes the greedy procedure again, keeping the overall worst-case complexity of $O(N \cdot |\mathcal{E}|^2)$.
Lastly, the ``best-single-item'' check requires $O(N \cdot |\mathcal{E}|)$ operations. Consequently, the overall computational complexity is $O(N \cdot |\mathcal{E}|^2)$ in the worst case.

\begin{algorithm}[t]
    \caption{Accelerated Greedy Algorithm}
    \label{alg:lazy_greedy}
    \LinesNumbered
    \SetKwFunction{ExtractMax}{ExtractMax}
    \SetKwFunction{MaxPrior}{MaxPrior}
    \KwIn{$\mathcal{E}$, $T$, $c_e, \forall e \in \mathcal{E}$}
    \KwOut{$K^\star, \mathcal{G}_k^\star, \mathbf{W}_k^\star, \hat{R}_k^\star$}
    
    Initialize $\mathcal{S}_{\mathrm{greedy}}^\dag \leftarrow \emptyset$, $T_{\mathrm{rem}} \leftarrow T$, $\mathcal{Q} \leftarrow \emptyset$\;

    \For{$e \in \mathcal{E}$}{
        $\rho_{e, \mathrm{init}} \leftarrow F(\{e\}) / c_e$\;
        \lIf{$c_e \leq T$ \textnormal{\textbf{and}} $\rho_{e, \mathrm{init}} > 0$}{$\mathcal{Q}.\texttt{Insert}(e, \rho_{e,\mathrm{init}})$}
    }
    
    \For{$iter \leftarrow 1$ \KwTo $2$}{\tcp{Accelerated Greedy Selection}
        \While{$\mathcal{Q} \neq \emptyset$ \textnormal{\textbf{and}} $T_{\mathrm{rem}} > 0$}{
            $(e, \rho_{e, \mathrm{bound}}) \leftarrow \mathcal{Q}.\ExtractMax()$\;
            
            \lIf{$c_e > T_{\mathrm{rem}}$}{\textbf{continue}}
            
            $\rho_{e, \mathrm{curr}} \leftarrow \frac{F(\mathcal{S}_{\mathrm{greedy}}^\dag \cup \{e\}) - F(\mathcal{S}_{\mathrm{greedy}}^\dag)} {c_e}$\;
            
            \eIf{$\mathcal{Q} = \emptyset$ \textnormal{\textbf{or}} $\rho_{e, \mathrm{curr}} \geq \mathcal{Q}.\MaxPrior()$}{
                \lIf{$\rho_{e, \mathrm{curr}} \leq 0$}{\textbf{break}}
                $\mathcal{S}_{\mathrm{greedy}}^\dag \leftarrow \mathcal{S}_{\mathrm{greedy}}^\dag \cup \{e\}$ \;
                $T_{\mathrm{rem}} \leftarrow T_{\mathrm{rem}} - c_e$\;
                Let $e = (l,m)$; $\mathcal{Q} \leftarrow \mathcal{Q} \setminus \{ (l,m') \in \mathcal{Q} \mid m' > m\}$\;
            }{
                $\mathcal{Q}.\texttt{Insert}(e, \rho_{e,\mathrm{curr}})$\;
            }
        }
        
        \If{$iter = 1$}{
        \tcp{Redundant item removal}
            $\Omega \leftarrow \bigcup_{l \in \mathcal{L}} \big\{ (l,m) \in \mathcal{S}_{\mathrm{greedy}}^\dag \mid m \neq \operatorname*{argmin}_{\{m' \mid (l,m') \in \mathcal{S}_{\mathrm{greedy}}^\dag\}} r_{m'} \big\}$\;
            $\mathcal{S}_{\mathrm{greedy}}^\dag \leftarrow \mathcal{S}_{\mathrm{greedy}}^\dag \setminus \Omega$\;
            $\Delta T \leftarrow \sum_{e \in \Omega} c_e$\;
            
            \lIf{$\Delta T \leq 0$}{\textbf{break}}
            $T_{\mathrm{rem}} \leftarrow T_{\mathrm{rem}} + \Delta T$, $\mathcal{Q} \leftarrow \emptyset$\;
            \For{$e \in \mathcal{E} \setminus \mathcal{S}_{\mathrm{greedy}}^\dag$}{
                $\rho_{e, \mathrm{curr}} \leftarrow \frac{F(\mathcal{S}_{\mathrm{greedy}}^\dag \cup \{e\}) - F(\mathcal{S}_{\mathrm{greedy}}^\dag)}{c_e}$\;
                \lIf{$c_e \leq T_{\mathrm{rem}}$ \textnormal{\textbf{and}} $\rho_{e, \mathrm{curr}} > 0$}{$\mathcal{Q}.\texttt{Insert}(e, \rho_{e,\mathrm{curr}})$}
            }
        }
    }
    \tcp{Best-single-item Check \& Output}       
    $\mathcal{S}_{\mathrm{single}} = \operatorname{argmax}_{\{e \in \mathcal{E} \mid c_e \leq T\}} F(\{e\})$\;
    $\mathcal{S}^\star \leftarrow \operatorname*{argmax}_{\mathcal{S} \in \{\mathcal{S}_{\mathrm{greedy}}^\dag, \mathcal{S}_{\mathrm{single}}\}} F(\mathcal{S})$\;
    Reconstruct $\mathbf{X}^\star$ from $\mathcal{S}^\star$ via \eqref{eq:reconstruct_x}\;
    Map the solution $K^\star, \mathcal{G}_k^\star, \mathbf{W}_k^\star, \hat{R}_k^\star$ to $\mathcal{P}_1$ via \eqref{eq:map_back}\;
    \Return $K^\star, \mathcal{G}_k^\star, \mathbf{W}_k^\star, \hat{R}_k^\star$
\end{algorithm}

\subsection{Accelerated Greedy Algorithm}
\label{subsec:lazy_greedy}
The standard greedy strategy, i.e., Algorithm \ref{alg:greedy}, necessitates re-evaluating the marginal utility ratio $\rho_e(\mathcal{S})$ for all remaining candidate items $e \in \mathcal{E} \setminus \mathcal{S}$ at every iteration, which may be prohibitive for real-time applications. To mitigate this, we employ an \textit{accelerated} greedy algorithm by exploiting the monotone submodularity property established in Theorem \ref{theo:submodular} and the special structure of our candidate set to significantly reduce the number of evaluations required at each iteration. 

The first modification comes from \cite{minoux2005accelerated}. The monotone submodularity property implies that the marginal gain of adding any item $e$ is non-increasing as the selected set grows, which implies that there can be redundancy in item evaluation. Since the transmission latency $c_e$ is constant for a given item, the marginal utility ratio $\rho_e(\mathcal{S}) \triangleq {\Delta_e F(\mathcal{S})}/{c_e}$ is inherently non-increasing. Therefore, any value of $\rho_{e}(\mathcal{S})$ computed in a prior iteration is an upper bound for the current one. Underpinned by this observation, as shown in Algorithm \ref{alg:lazy_greedy}, we maintain candidate items in a max-priority queue $\mathcal{Q}$, which is sorted by their most recently computed marginal utility ratios. 
In each iteration, we extract the candidate $e$ with the highest upper bound and re-evaluate its current marginal utility ratio $\rho_{e, \mathrm{curr}}$ relative to the current set (lines 8-10). If this updated value exceeds or equals the upper bound of the next candidate in $\mathcal{Q}$, submodularity mathematically guarantees that $e$ provides the maximum marginal ratio among all remaining items. Thus, it is immediately selected without evaluating the remaining items; otherwise, $e$ is re-inserted into $\mathcal{Q}$ with its updated ratio for future consideration (lines 11-18).

The second modification comes from the grid-rate structure in Fig.~\ref{table:reformulation}. As shown in the figure, for a given grid $l$, transmitting it at a lower rate (toward the left, with higher latency) is decodable by a superset of users compared to any higher-rate option (toward the right). Hence, once an item $(l,m)$ is selected in a given round, all remaining candidate items $(l, m')$ with the same grid $l$ and a higher rate $m'>m$ can be discarded. This is because these items will no longer improve the total utility in all subsequent greedy selections.

After obtaining the solution, we further refine it using the redundant item removal process and the ``best-single-item'' check, which is the same as Algorithm \ref{alg:greedy}. Finally, we reconstruct the optimal solution $K^\star, \mathcal{G}_k^\star, \mathbf{W}_k^\star, \hat{R}_k^\star$ to problem $\mathcal{P}_1$, similar to the procedure in Section \ref{subsec:standard_greedy}.
The complete procedure is detailed in Algorithm \ref{alg:lazy_greedy}. We establish the following theorem.

\begin{theorem}
    The proposed Algorithm \ref{alg:lazy_greedy} yields a $(1-1/\sqrt{e})$-approximate solution to $\mathcal{P}_1$ and $\mathcal{P}_2$ in the worst case.
\end{theorem}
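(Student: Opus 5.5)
The plan is to reduce this statement to the previous theorem for Algorithm \ref{alg:greedy}. I will show that Algorithm \ref{alg:lazy_greedy} produces a set whose value is at least that of a valid run of the standard budgeted greedy procedure. The guarantee of Khuller \textit{et al.} \cite{khuller1999budgeted} then transfers through the same chain of inequalities, and Theorem \ref{theorem:equivalence} carries it over to $\mathcal{P}_1$. There are two modifications to handle: the lazy evaluation of \cite{minoux2005accelerated}, and the pruning of higher-rate items in the same grid.

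\textbf{Step 1 (lazy evaluation).} By Theorem \ref{theo:submodular}, $\Delta_e F(\mathcal{S})$ is non-increasing in $\mathcal{S}$. Since $c_e$ is fixed, every key stored in $\mathcal{Q}$ is an upper bound on the current ratio $\rho_e(\mathcal{S}_{\mathrm{greedy}}^\dag)$. Suppose the popped item satisfies $\rho_{e,\mathrm{curr}}\ge \mathcal{Q}.\MaxPrior()$. Then $\rho_{e,\mathrm{curr}}$ is at least the true current ratio of every remaining item, so $e$ is an argmax of the exact greedy rule, up to tie-breaking. Items with $c_e>T_{\mathrm{rem}}$ are discarded. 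This is harmless because $T_{\mathrm{rem}}$ is non-increasing within a phase, so such items could never later be added; Algorithm \ref{alg:greedy} likewise skips them when they are reached. Hence, within each phase, the sequence of added items coincides with some execution of the standard cost-benefit greedy restricted to the current feasible items.

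\textbf{Step 2 (pruning).} Suppose $(l,m)$ has been selected. Every user with $\alpha_{n,m'}=1$ for $m'>m$ also has $\alpha_{n,m}=1$, because the SNR thresholds are monotone. So $y_{n,l}=1$ for all such users, and $\Delta_{(l,m')}F(\mathcal{S})=0$ for every superset $\mathcal{S}$. A pruned item would therefore never be chosen by the exact greedy rule with positive gain; at worst it would be added with zero gain, which Algorithm \ref{alg:greedy} also rejects through its break condition. Removing it thus does not change the greedy trajectory.

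\textbf{Step 3 (combining).} Write $\mathcal{S}_{\mathrm{greedy}}$ for the first-phase output. By Steps 1--2, it is a valid standard greedy output, so the Khuller bound $\max\{F(\mathcal{S}_{\mathrm{greedy}}),F(\mathcal{S}_{\mathrm{single}})\}\ge(1-1/\sqrt{e})F(\mathcal{S}^{\mathrm{OPT}})$ applies.

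Next I argue $F(\mathcal{S}_{\mathrm{greedy}}^\dag)\ge F(\mathcal{S}_{\mathrm{greedy}})$. Removing $\Omega$ loses no utility, since the retained lowest-rate item for each grid covers every user the removed items covered. The refill phase only adds items with positive marginal gain, so $F$ does not decrease. Feasibility holds because costs are recomputed exactly and items are inserted only when $c_e\le T_{\mathrm{rem}}$.

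The best-single-item comparison is identical to Algorithm \ref{alg:greedy}, so the chain of inequalities in the previous proof applies verbatim.

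\textbf{Main obstacle.} The delicate point is Step 1's claim that the lazy queue reproduces the exact greedy choice despite ties and the dropping of currently infeasible items. In particular, the Khuller analysis is phrased for a greedy that considers all items and skips over-budget ones. I would handle this by noting that their proof only uses the following property: at the moment the first optimal item is rejected for budget reasons, each earlier pick maximized the ratio among the items still feasible. The lazy variant satisfies exactly this property.
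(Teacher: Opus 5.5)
Your proposal is correct and follows the same route as the paper, which also reduces Algorithm~\ref{alg:lazy_greedy} to Algorithm~\ref{alg:greedy} in two steps: lazy evaluation reproduces the standard greedy choices, citing Minoux, and the pruned higher-rate items $(l,m')$ carry zero marginal gain. Your version adds what the paper leaves implicit: discarding items with $c_e > T_{\mathrm{rem}}$ is safe because $T_{\mathrm{rem}}$ only decreases within a phase, and you check $F(\mathcal{S}_{\mathrm{greedy}}^\dag)\ge F(\mathcal{S}_{\mathrm{greedy}})$ directly for the lazy run rather than assuming it matches Algorithm~\ref{alg:greedy}'s output. One small caveat: your ``main obstacle'' paragraph understates what the Khuller-type analysis needs, namely that each pick dominate all remaining optimal items, feasible or not, and that the rejected optimal item itself be an argmax; this does no harm, since your Step~1 trajectory equivalence already lets you apply the bound for Algorithm~\ref{alg:greedy} as is.
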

\begin{IEEEproof}
The first modification has been shown to find the exact solution as the standard greedy algorithm~\cite{minoux2005accelerated}, and the second modification only removes items with no utility gains without degrading the performance. As a result, it is easy to show it achieves the same worst-case approximation ratio as Algorithm \ref{alg:greedy}.
\end{IEEEproof}

\begin{remark}
    While advanced optimization schemes such as partial enumeration \cite{sviridenko2004note} or Greedy+Max \cite{yaroslavtsev2020bring} can theoretically improve the approximation ratio to $1 - 1/e$ or $1/2$ respectively, they achieve this at the expense of prohibitive computational complexity (e.g., scaling as $\mathcal{O}(|\mathcal{E}|^5)$). Such overhead is intractable for large-scale networks. In contrast, our proposed strategy provides a provable $1 - 1/\sqrt{e}$ guarantee with significantly lower computational overhead, making it well-suited for latency-sensitive applications, e.g., CP in autonomous driving.
\end{remark}

\emph{Computational Complexity:}
It is easy to verify that the worst-case complexity of the accelerated greedy algorithm is the same as that of Algorithm \ref{alg:greedy}. However, by avoiding redundant re-evaluations, the proposed accelerated greedy algorithm achieves enormous speedups in practice.

\section{Experiments}\label{sec:simulation}
In this section, we provide numerical experiments to evaluate the performance of our proposed Birdcast framework. We compare our Birdcast framework with several benchmark methods to demonstrate its superiority.

\subsection{Experiment Settings}

\textbf{Setting.}
While the Birdcast framework is broadly applicable to wireless transmission scenarios featuring heterogeneous user interests and channel conditions, we instantiate and evaluate it in the context of V2I-CP, which serves as the primary motivating application throughout this work. We evaluate the framework on the V2X-Sim dataset \cite{li2022v2x}, a comprehensive V2X CP dataset generated via SUMO and CARLA. We randomly select 10 scenes of data for performance evaluation.
The communication operates at a center frequency of 5.9 GHz with a bandwidth of $B = 100$ MHz \cite{5gaa2021deployment}. The channel model assumes a path-loss exponent of 3.8, and the maximum latency budget is set to $T = 30$ ms \cite{ma2025raise}.
The discrete set of MCS rates (in bits/s/Hz) is:
$\mathcal{R} = \{0.31, 0.49, 0.74, 1.03, 1.33, 1.48, 1.91, 2.41, 2.57, 3.03, 3.61, \\
4.21, 4.82, 5.33\}$  \cite{3GPPTR,feukeu2013mcs}.
The corresponding SNR thresholds required for reliable decoding are: $\Gamma = \{-4.0, -1.0, 2.5, 5.5, 8.5, 10.5, 13.0, 16.0, 18.0, 20.5, 24.0, 27.0, \\ 30.0, 33.0\}$ dB. 
The HVN transmit power is 23 dBm, and the noise power is -92 dBm. For the 3D object detection task, we employ PointPillars \cite{lang2019pointpillars} as the backbone detector. 
The size of BEV grids is $L =  250$, and the data volume of a single grid is $\delta = 1.6$ KB. The window size for the information mask calculation is $W_\mathrm{s} = 5$.
The setup for our experiments includes 2 Intel(R) Xeon(R) Silver 4410Y CPUs (2.0 GHz), 4 NVIDIA RTX A5000 GPUs, and 512 GB DDR4 RAM.
The key parameters are summarized in Table \ref{table_para}.
We evaluate both total utility and detection performance, measured by mean average precision (mAP), when sharing selected features under communication constraints.

\begin{table}[t]
\centering
\caption{Parameter settings for simulations.}
\label{table_para}
\renewcommand{\arraystretch}{1.4}
\setlength{\tabcolsep}{2mm}
\begin{tabular}{|c|c|}
\hline
{Number of users} &  {$N$ = 20-25}\\  \hline
{Bandwidth}& {$B = 100$ MHz}   \\  \hline
Latency Budget  & {$T = 30$ ms}  \\  \hline
The discrete set of MCS   & $\mathcal{R} = \{0.31,  \ldots, 5.33\} $ bits/s/Hz \\ \hline
SNR thresholds & $\Gamma = \{-4.0,  \ldots, 33.0\}$ dB \\  \hline
The number of BEV grids& {$L= 250$ } \\ \hline    
The data volume of a grid & {$\delta = 1.6$ KB } \\ \hline    
The window size  & {\(W_\mathrm{s} = 5\)}  \\  \hline
\end{tabular}
\end{table}

\textbf{Baselines.}
Since existing multicast algorithms primarily focus on user partitioning and resource allocation without natively supporting content (feature) selection, we pair them with a greedy selection strategy. Specifically, once groups and their bottleneck rates are determined, the HVN greedily transmits the feature grids that offer the highest sum of utility until the latency budget $T$ is exhausted. 

\begin{itemize}

    \item \textbf{Marginal-utility \cite{lin2012video}:} A greedy heuristic that dynamically selects features and their transmission rates to maximize the incremental visual quality per unit of time.
    
    \item \textbf{K-means++ \cite{zhang2021joint}:} A clustering-based algorithm that partitions users into groups strictly by channel conditions using modified K-means++. Then, it greedily selects features to maximize local utility given the fixed multicast groups.

    \item \textbf{Dynamic Programming (DP) \cite{chen2015fair}:} A grouping algorithm that sorts users based on their channel qualities and employs a recurrence relation to determine the optimal user partitioning to maximize total utility. We evaluate two variants: i) \textbf{DP-fair}, which imposes a constant fairness constraint as originally in \cite{chen2015fair}, and ii) \textbf{Standard DP}, which omits this constraint.

    \item \textbf{Broadcast \cite{shi2024soar}:} This method aggregates all users into a single group. To ensure reliability, the transmission rate is constrained by the user experiencing the worst channel condition.
    
    \item \textbf{Unicast:} This approach serves each user individually via dedicated links. While it allows for optimal feature transmission per-user link, the latency budget severely limits the volume of features each individual user can receive.

\end{itemize}

\begin{figure}[t]
\centering
\begin{subfigure}[b]{0.24\textwidth} 
  \includegraphics[width=\textwidth]{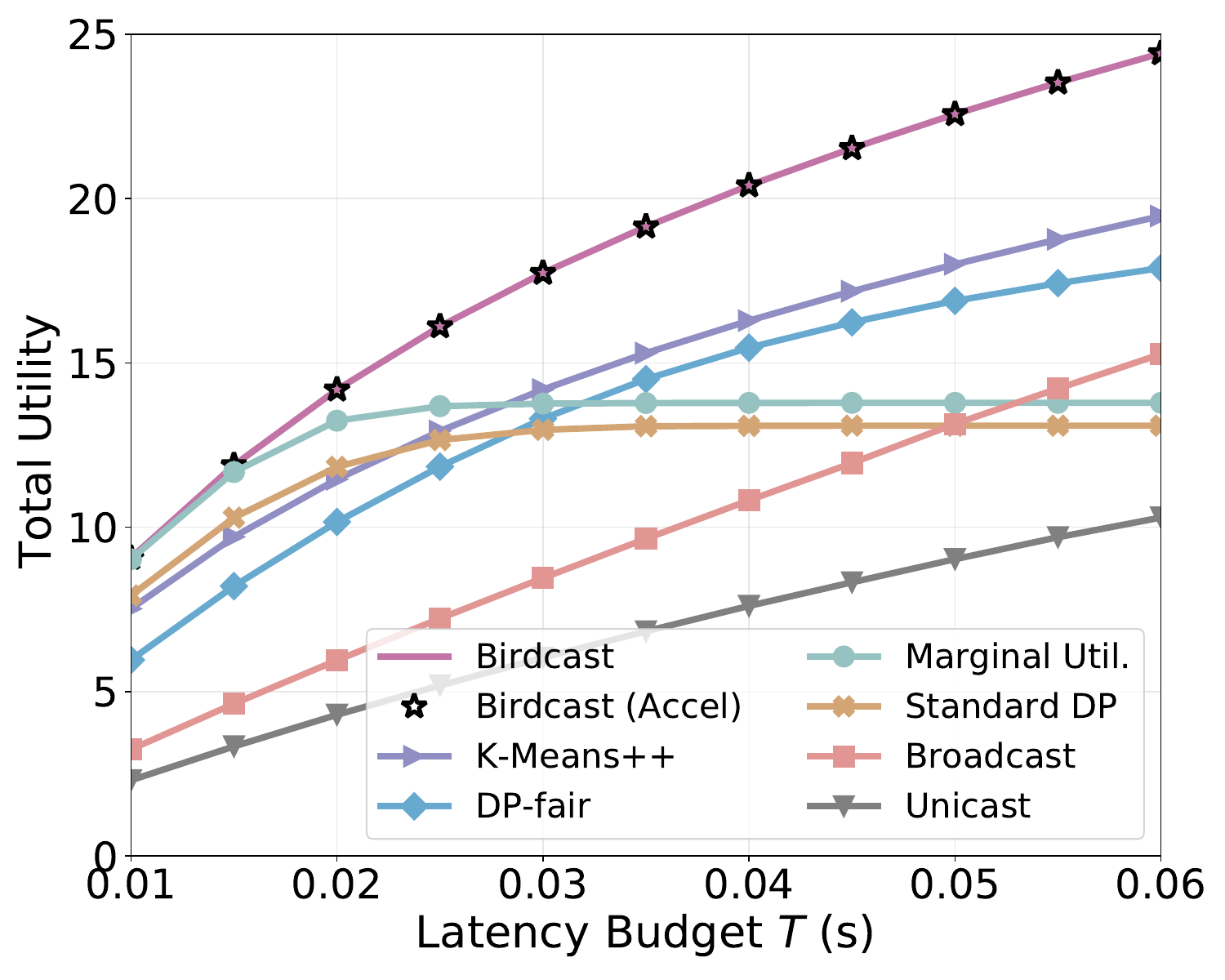}
  \caption{Total utility versus the latency budget $T$.}\label{fig:utility_latency}
\end{subfigure}
\hfill
\begin{subfigure}[b]{0.24\textwidth} 
  \includegraphics[width=\textwidth]{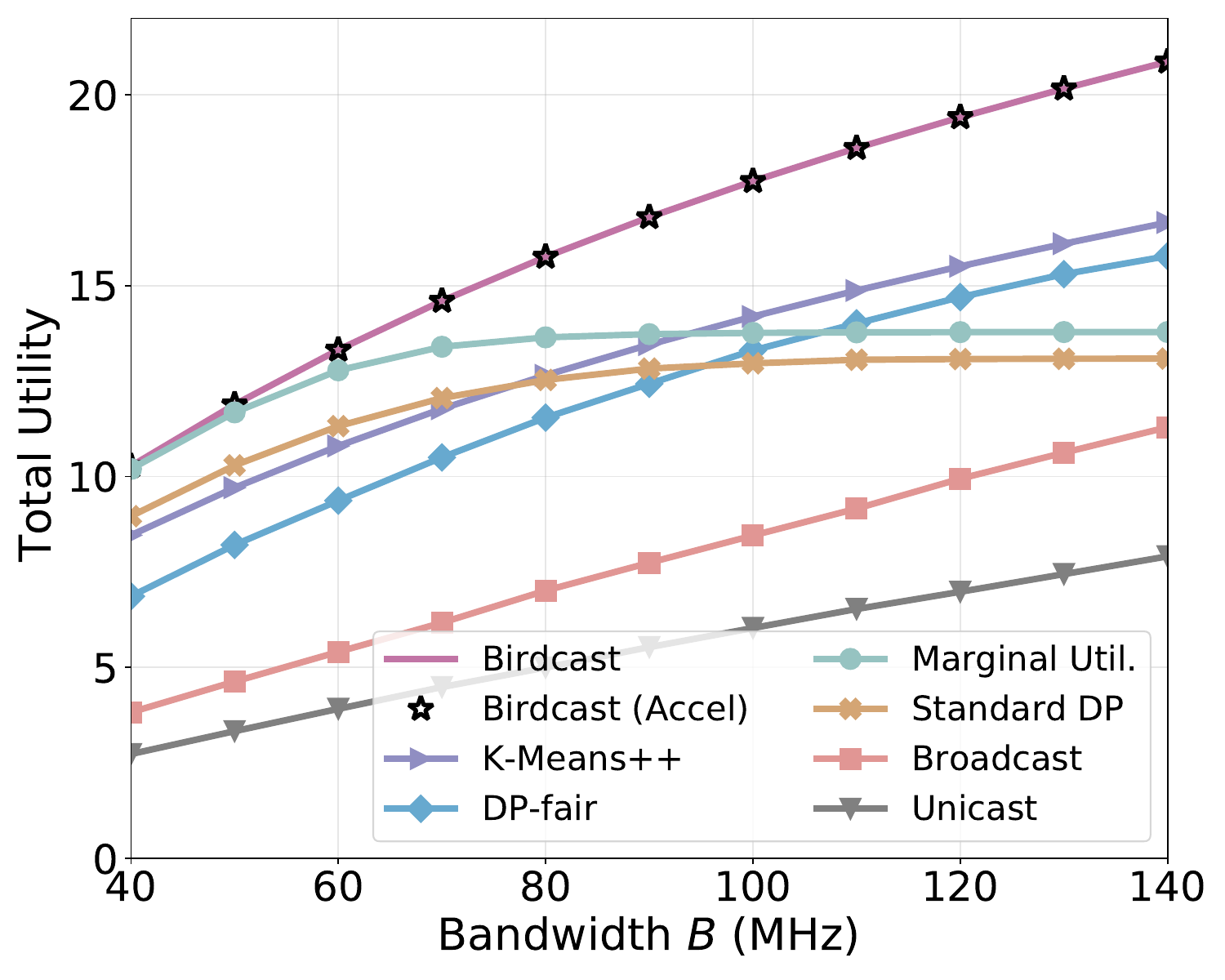}
  \caption{Total utility versus the bandwidth $B$.}\label{fig:utility_bandwidth}
\end{subfigure}
\\
\begin{subfigure}[b]{0.24\textwidth} 
  \includegraphics[width=\textwidth]{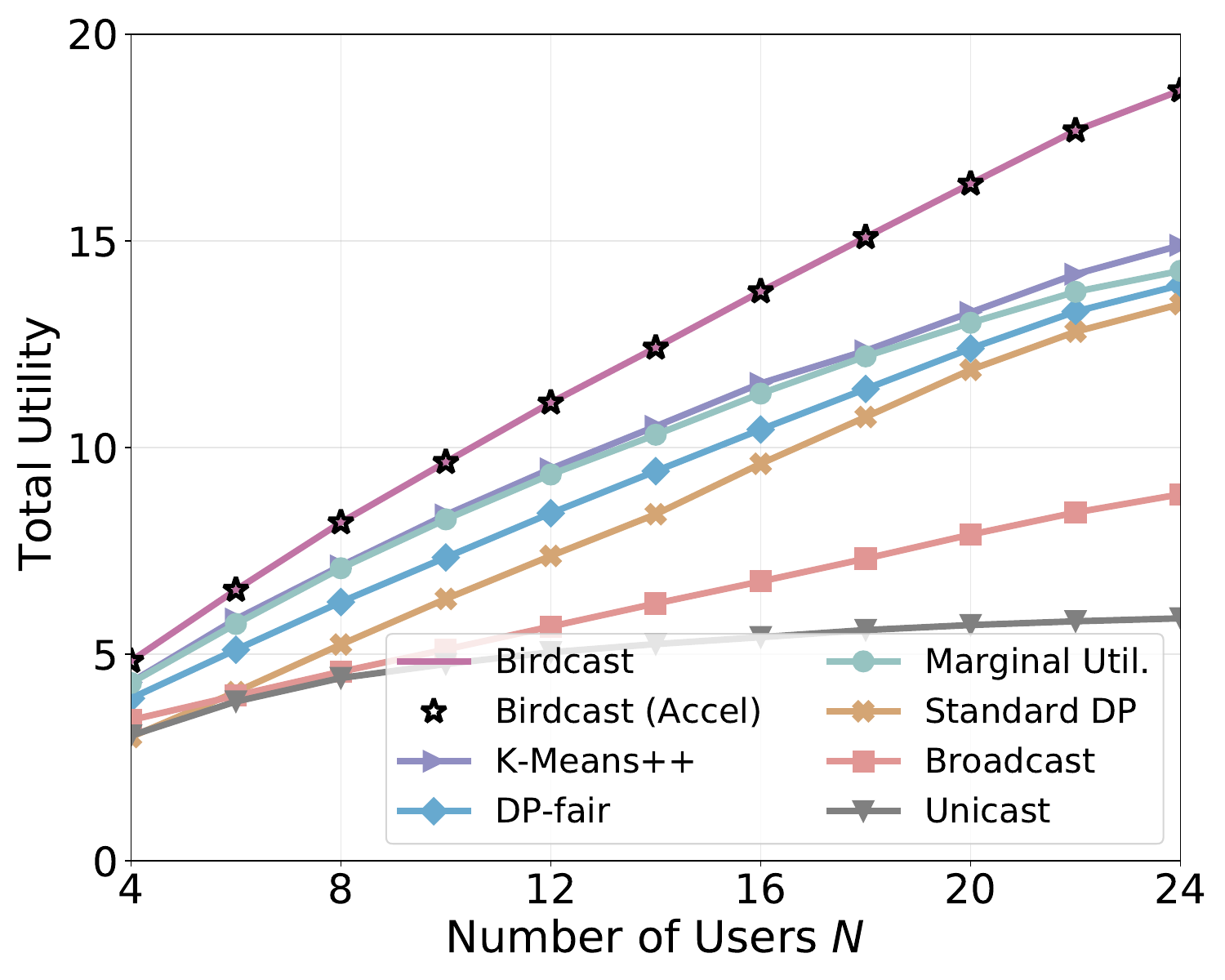}
  \caption{Total utility versus the number of users $N$.}\label{fig:utility_numveh}
\end{subfigure}
\hfill
\begin{subfigure}[b]{0.24\textwidth} 
  \includegraphics[width=\textwidth]{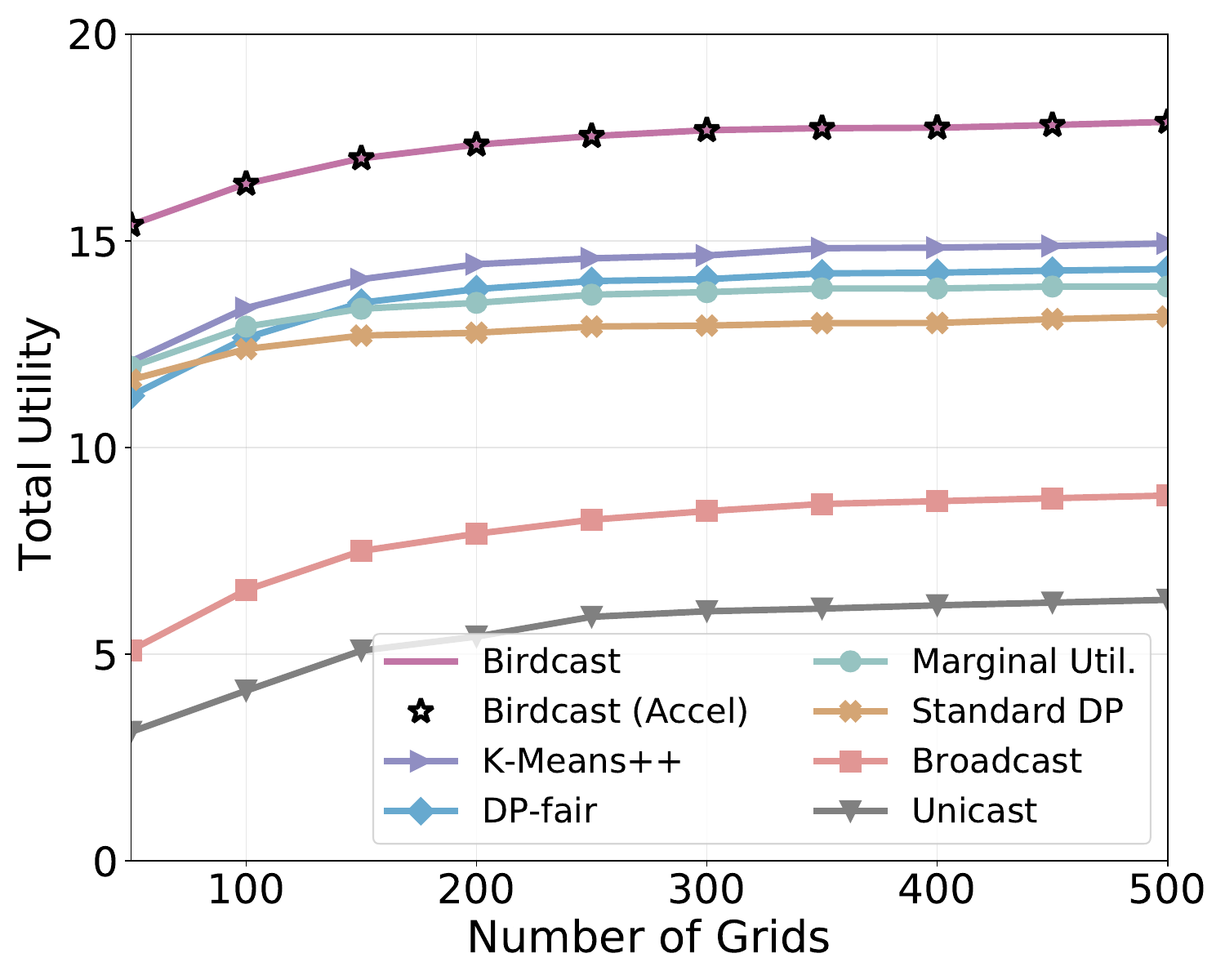}
  \caption{Total utility versus the number of grids $L$.}\label{fig:utility_numgrid}
\end{subfigure}
\caption{The total utility versus the network configurations. }\label{fig:utility}
\vspace{-0.0cm}
\end{figure}

\subsection{Utility Evaluation}
We first evaluate the system's total utility. Since standard CP datasets are specifically tailored for ego-user object detection, the number of users (vehicles) per scene is inherently limited (e.g., up to five in the V2X-Sim dataset). To rigorously assess utility performance and the effectiveness of our framework in dense environments, we simulate additional users based on the V2X-Sim dataset. Beyond the existing users with complete sensor data, we identify 17 to 22 supplementary users from the ground truth to serve as ``simulated'' users, yielding a total of 20 to 25 users per scene. For existing users, we generate the MoI directly from their sensor data. For the simulated users lacking raw sensor data, we synthesize the corresponding MoI based on their geometric positions and the density of surrounding vehicles.

In Fig. \ref{fig:utility_latency}, we present the total utility versus the latency budget $T$. As the latency budget $T$ relaxes, the total utility generally increases across all methods, as the system can accommodate more data transmission. 
Our Birdcast framework (employing the standard greedy algorithm) and its accelerated variant, Birdcast (Accel), consistently achieve the highest utility among the baselines, thanks to their adaptive grouping strategy. Unlike Broadcast, which is bottlenecked by the user with the worst channel condition, or Unicast, which suffers from resource exhaustion as it attempts to serve users sequentially, Birdcast efficiently balances spectral efficiency and user coverage. By dynamically clustering users, our Birdcast framework ensures that critical features are reliably delivered to the most relevant users within the latency constraint, yielding high utility. On the other hand, heuristic baselines such as K-means++ and DP-fair suffer from suboptimal group formations. Furthermore, Marginal Util. and Standard DP prematurely plateau after approximately 20 ms. This occurs because these heuristic algorithms commit to a single, high transmission rate for a specific feature grid without subsequently transmitting that same grid at a lower rate to reach a broader audience.

In Fig. \ref{fig:utility_bandwidth}, we show the utility versus the available bandwidth $B$. As anticipated, increasing the bandwidth yields a monotonic improvement in total utility across all schemes, as the relaxed resource constraint leads to higher data rates. Our Birdcast consistently yields the highest total utility, demonstrating superior spectral efficiency. In the low-bandwidth regime (e.g., $B = 40$ MHz), resource scarcity is the dominant bottleneck. In this setting, both Birdcast and Marginal Util. prioritize transmitting the BEV feature grids at the maximum data rate, thereby reaching only the user with the most favorable channel conditions. As a result, both methods exhibit comparable performance.
Nevertheless, as bandwidth increases, the performance gaps between Birdcast and others, including Marginal Util., increase due to their heuristic grouping logic.

Fig. \ref{fig:utility_numveh} illustrates the system's total utility as the number of users, $N$, increases from 4 to 24. As user density increases, the limited communication resources must be shared among a larger population, widening the performance gap between methods. Birdcast demonstrates superior scalability and robustness compared to baselines. By dynamically optimizing multicast grouping formation, Birdcast effectively balances the trade-off between multicasting gains and channel quality constraints. Heuristic baselines, such as Marginal Util and K-means++, fail to capture the complex interplay between user interest (MoI) and channel conditions, resulting in suboptimal groupings.

\begin{figure}[t]
\centering
\begin{subfigure}[b]{0.24\textwidth} \label{fig:time1}
  \includegraphics[width=\textwidth]{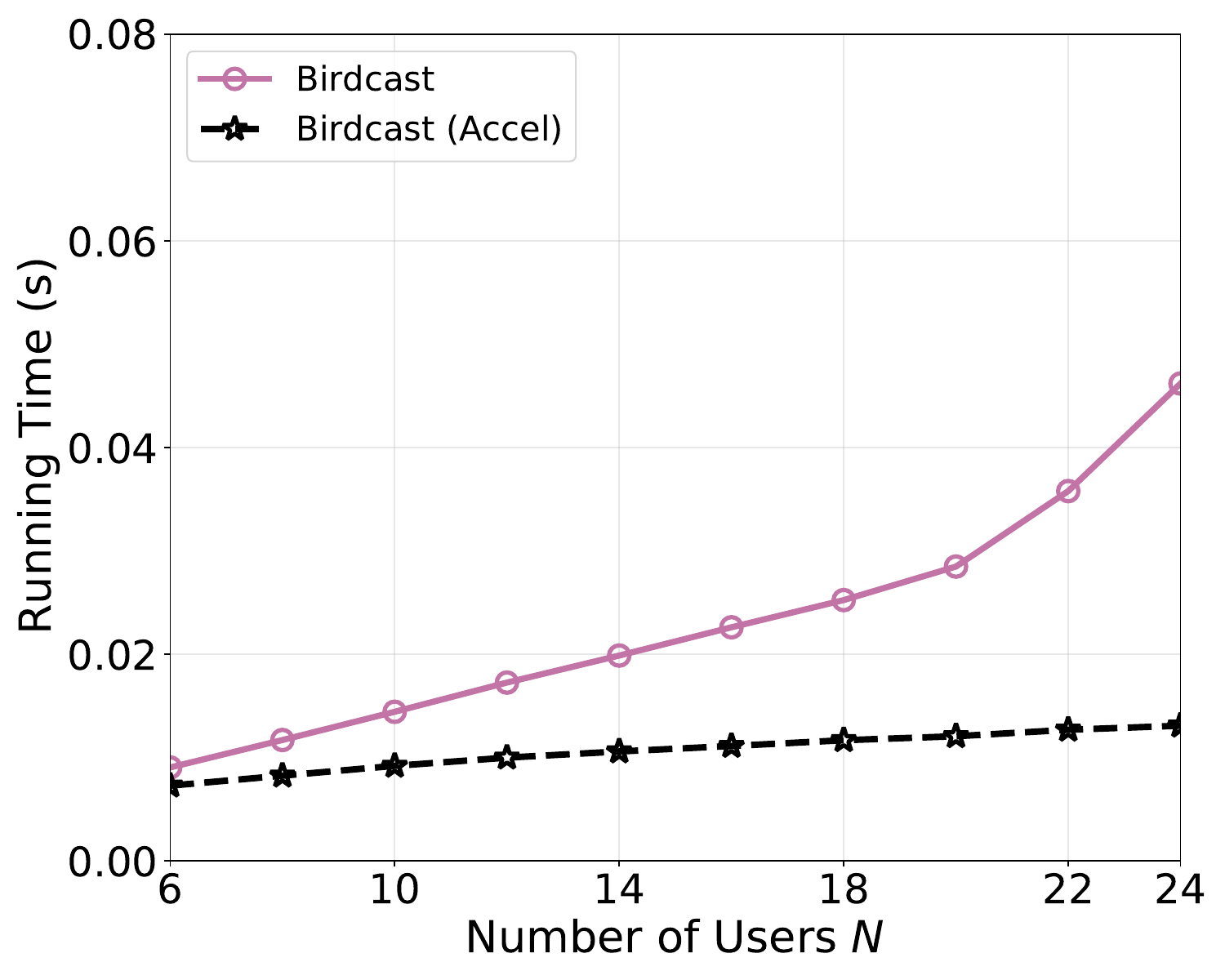}
  \caption{Running time versus the number of users $N$ with $L = 250$ grids.}
\end{subfigure}
\hfill
\begin{subfigure}[b]{0.24\textwidth} \label{fig:time2}
  \includegraphics[width=\textwidth]{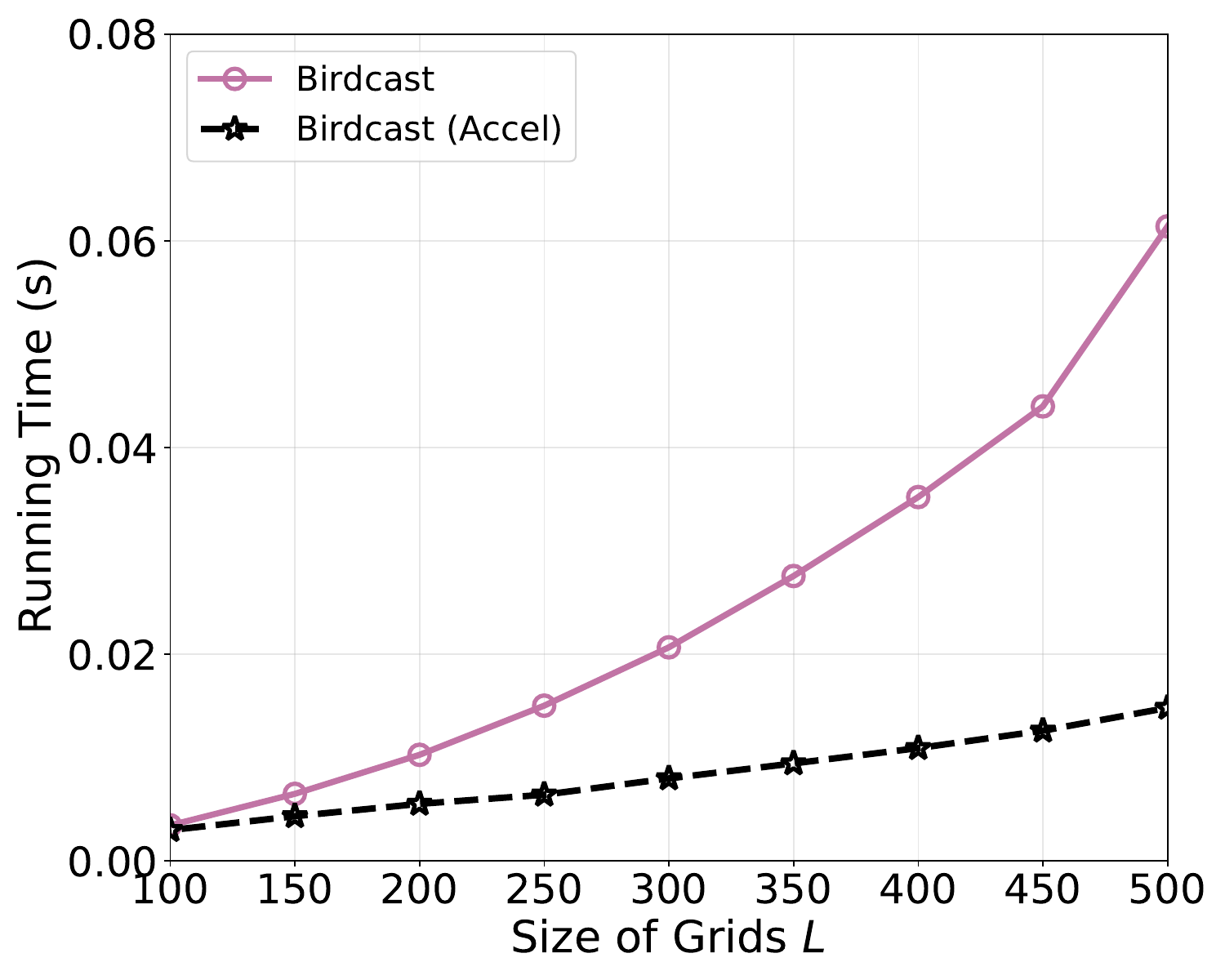}
  \caption{Running time versus the size of BEV grids $L$ with $N = 12$ users.}
\end{subfigure}
\caption{The running time versus network configurations.}\label{fig:time}
\vspace{-0.0cm}
\end{figure}

The total utility versus the number of BEV grids $L$ is presented in Fig. \ref{fig:utility_numgrid}. As the number of grids increases, the total utility of all schemes initially improves and then gradually converges to a plateau. This initial improvement can be expected: increasing the number of grids yields higher spatial resolution, allowing for more precise selection of grids. However, as $L$ increases further (e.g., beyond 250 grids), the marginal performance gains diminish, and the utility curves become nearly flat. This saturation occurs because the system capacity is fundamentally restricted by latency and bandwidth budgets rather than the granularity of grid selection. Consequently, partitioning the spatial region into an excessively large number of grids is unnecessary, as doing so increases computational complexity while yielding negligible utility benefits.



Finally, Fig. \ref{fig:time} illustrates the running time of our algorithms. Across different numbers of users and grids, both the greedy and accelerated greedy algorithms run efficiently. In particular, the accelerated greedy algorithm is substantially faster than the standard greedy algorithm and is very efficient at practical network scales (e.g., 13 ms for 24 users). Moreover, its running time grows approximately linearly with the number of users, demonstrating the scalability of our algorithms.



\begin{figure}[t]
\centering
\begin{subfigure}[b]{0.24\textwidth} 
  \includegraphics[width=\textwidth]{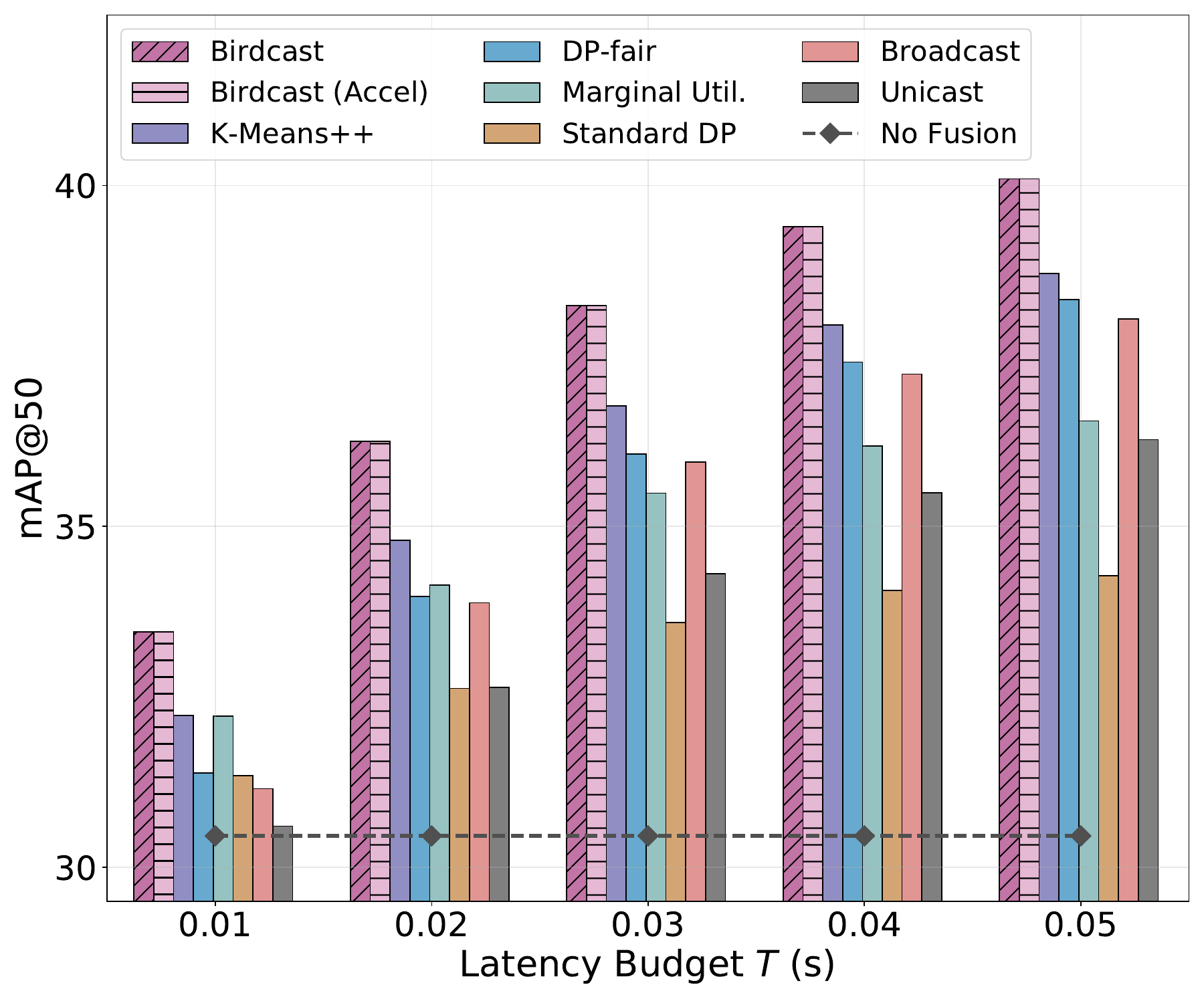}
  \caption{mAP@50 versus the latency budget $T$ with bandwidth $B = 100$ MHz.}\label{fig:map50_vs_latency}
\end{subfigure}
\hfill
\begin{subfigure}[b]{0.24\textwidth}
  \includegraphics[width=\textwidth]{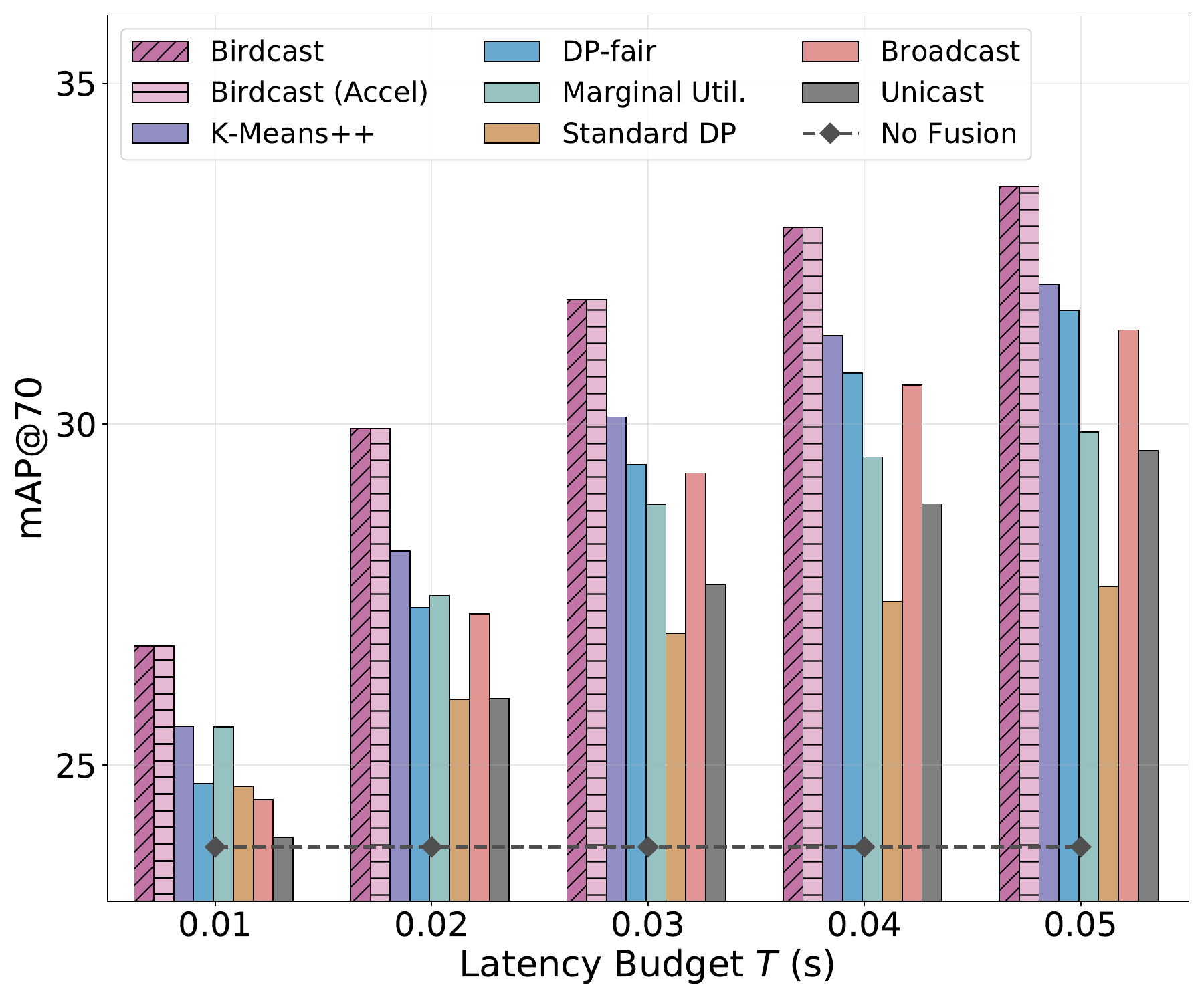}
  \caption{mAP@70 versus the latency budget $T$ with bandwidth $B = 100$ MHz.} \label{fig:map70_vs_latency}
\end{subfigure}
\caption{The perception performance (mAP@50/mAP@70) versus the latency budget $T$ with bandwidth $B = 100$ MHz.}\label{fig:map_vs_latency}
\vspace{-0.0cm}
\end{figure}

\begin{figure}[t]
\centering
\begin{subfigure}[b]{0.24\textwidth} 
  \includegraphics[width=\textwidth]{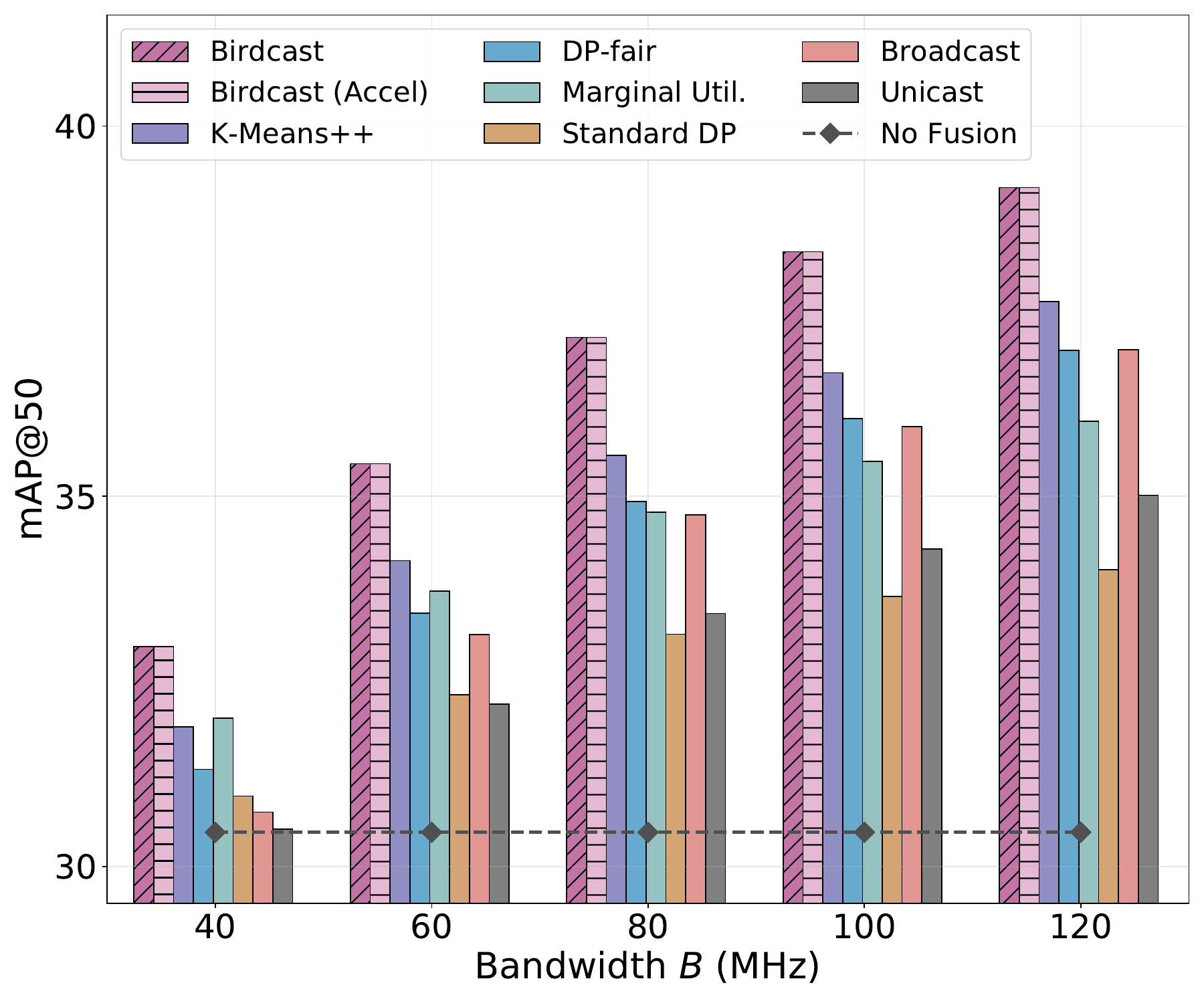}
  \caption{mAP@50 versus the bandwidth $B$ with latency budget $T=30$ ms.}\label{fig:map50_vs_bandwidth}
\end{subfigure}
\hfill
\begin{subfigure}[b]{0.24\textwidth} 
  \includegraphics[width=\textwidth]{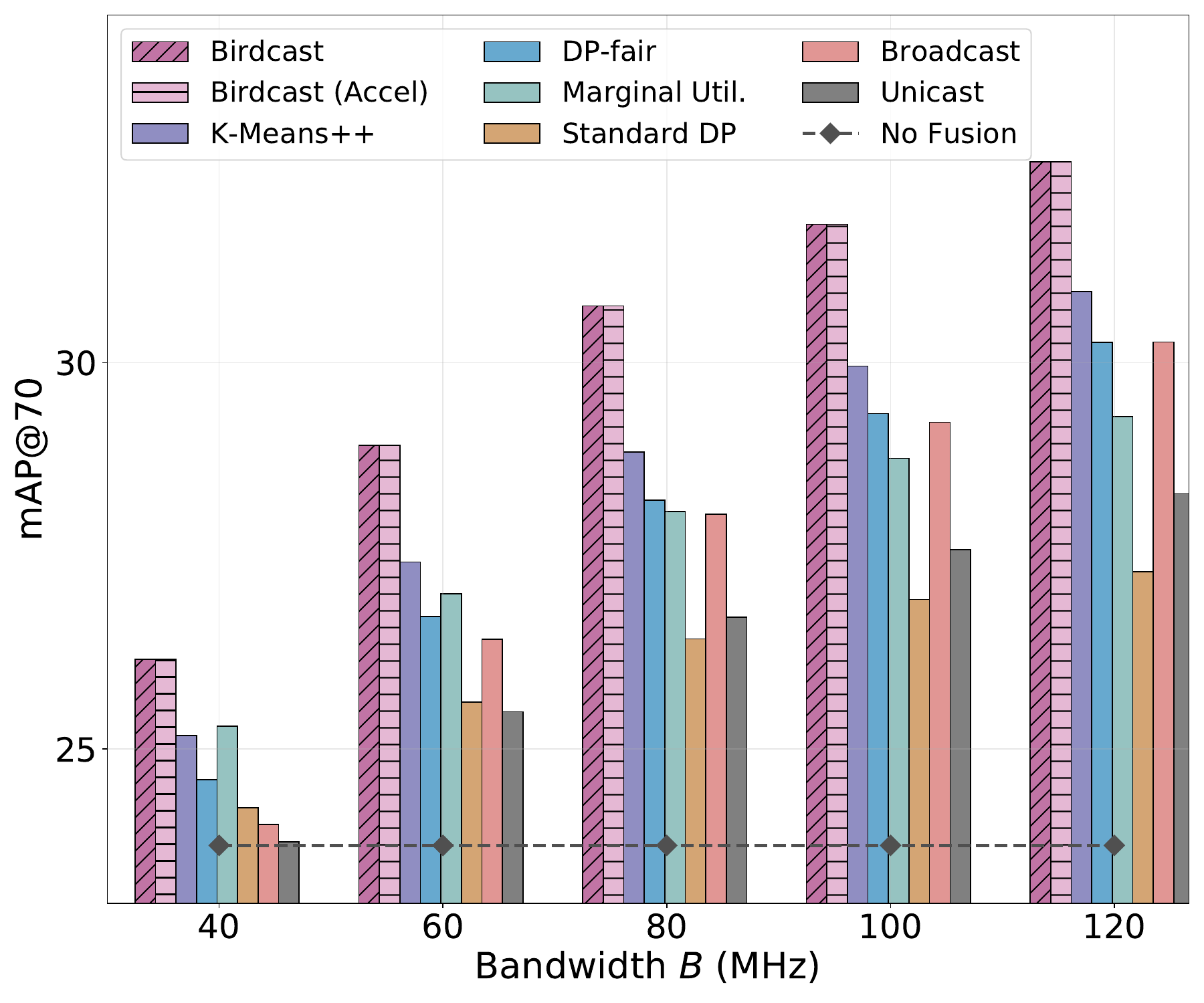}
  \caption{mAP@70 versus the bandwidth $B$ with latency budget $T=30$ ms.}\label{fig:map70_vs_bandwidth}
\end{subfigure}
\caption{The perception performance (mAP@50/mAP@70) versus the bandwidth $B$ with latency budget $T=30$ ms.}\label{fig:map_vs_bandwidth}
\vspace{-0.0cm}
\end{figure}

\begin{figure*}[t]
\centering
  \includegraphics[width=0.9\textwidth]{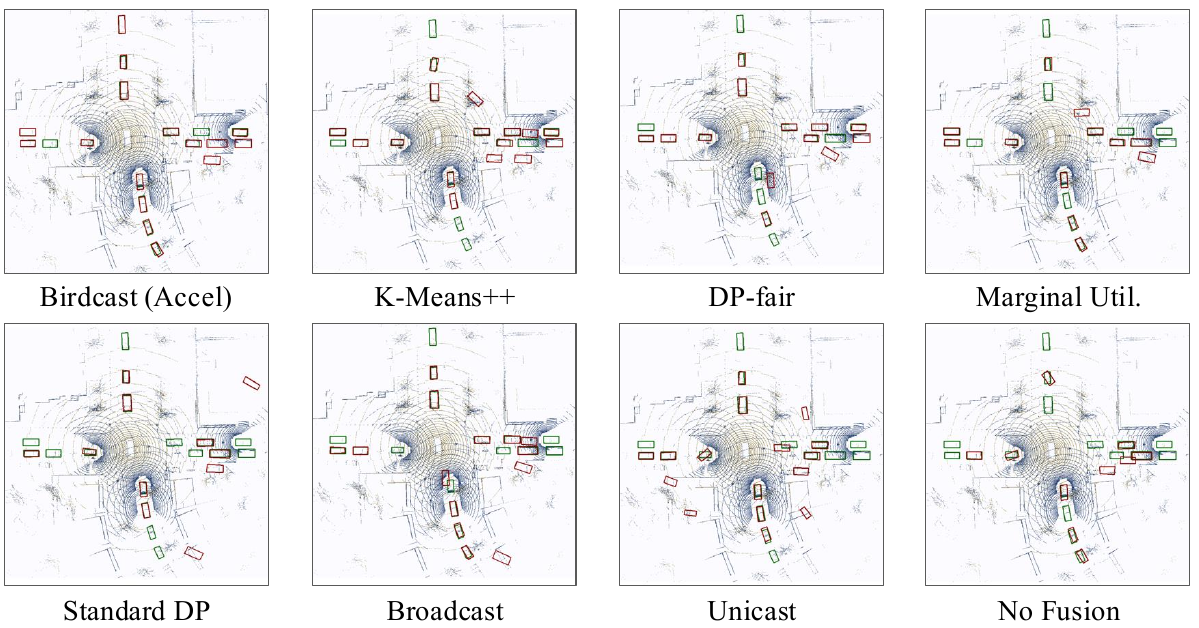}
  
\caption{Visualization of 3D detection results on V2X-Sim dataset. Red boxes are predictions while the green ones are ground truth. The bandwidth is $B=100$ MHz and the latency budget is $T=30$ ms.}\label{fig:visualization}
\vspace{-0.0cm}
\end{figure*}

\subsection{Perception Performance}

In this section, we present the overall perception performance of the Birdcast framework. To investigate the advantages of our scheme, we add another baseline: \textbf{No Fusion:} In this baseline, users perform perception tasks solely using their onboard sensors. No features are received or fused from the HVN, serving as the lower-bound benchmark for our system.

Fig. \ref{fig:map50_vs_latency} and Fig. \ref{fig:map70_vs_latency} show the perception performance, measured in mAP@50 and mAP@70, respectively, versus the latency budget $T$ with a fixed bandwidth $B$ of 100 MHz. The No Fusion baseline remains constant across all latency budgets since it relies entirely on local sensing, yielding the lowest mAP scores. For all other methods, the perception performance improves as the latency budget increases, allowing more informative features to be transmitted and fused. Birdcast and Birdcast (Accel) consistently outperform all baselines across the entire latency spectrum. Under extremely tight latency constraints (e.g., $T$ = 10 ms), Birdcast and Marginal Util. exhibit similar performance, because both methods default to transmitting grids at the peak rate to target the user with the best channel conditions in such restrictive settings. However, as the latency budget increases, Birdcast effectively capitalizes on the additional time to multicast features to judiciously formed groups, achieving significant mAP gains. In contrast, other heuristic approaches, including Marginal Util. and Standard DP, attain inferior performance.

Fig. \ref{fig:map50_vs_bandwidth} and Fig. \ref{fig:map70_vs_bandwidth} illustrate the perception performance as a function of the available system bandwidth $B$ under a fixed latency budget $T$ of 30 ms. Similar to the latency trends, an increase in bandwidth yields a monotonic improvement in mAP across all communication-based schemes, as relaxed resource constraints enable higher transmission rates. Birdcast and Birdcast (Accel) consistently achieve the highest mAP@50 and mAP@70 across all bandwidth regimes. 


Finally, we present the qualitative visualization results in Fig. \ref{fig:visualization}, offering an intuitive comparison of perception performance across all methods. In the No Fusion baseline, the user suffers from severe occlusion and a limited sensing range, leading to many missed detections. While conventional Unicast and Broadcast introduce external features, their low resource efficiency prevents the consistent delivery of critical spatial data, leaving some distant or heavily occluded objects undetected. Similarly, heuristic approaches, such as Marginal Util. and K-Means++, exhibit improved coverage but still produce imprecise bounding boxes due to suboptimal feature selection and user grouping. In contrast, Birdcast (Accel) judiciously prioritizes the most highly valued feature grids for properly formed multicast groups. As depicted in the visualization, Birdcast (Accel) successfully mitigates occlusion and extends the effective perception range, yielding highly accurate predictions and demonstrating its robust real-world applicability.





\section{Conclusion} \label{sec:conclusion}
In this paper, we have proposed \textit{Birdcast}, a multicasting framework for infrastructure-assisted collaborative perception (CP), featuring users with heterogeneous interests. To facilitate communication-efficient transmissions, the HVN and users collaboratively generate maps of interest (MoIs). Leveraging these MoIs, we have formulated a joint feature selection and multicast grouping problem aimed at maximizing network-wide utility while adhering to latency constraints. Given the NP-hard nature of this optimization problem, we have transformed it from a grid-rate perspective. By establishing the submodularity of the objective function, we have devised an accelerated greedy algorithm that is computationally efficient and guarantees an approximation ratio of $1 - 1/\sqrt{e}$. Extensive simulations on the V2X-Sim dataset demonstrate that Birdcast significantly outperforms baseline benchmarks, achieving up to a 27\% improvement in total utility and a 3.2\% gain in mean Average Precision for object detection.

While this study addresses interest-aware multicasting for infrastructure-aided CP, the Birdcast framework can be readily applied to diverse multicasting applications characterized by heterogeneous user interests. Furthermore, as the current system relies on downlink transmission from the infrastructure, future work can explore multicasting problems for vehicle-to-vehicle CP, enabling users to directly exchange features with each other under communication constraints.





\bibliographystyle{IEEEtran} 
\bibliography{reference}

\vfill

\end{document}